\newtheorem{Theorem}{Theorem}[section]
\newtheorem{Def}[Theorem]{Definition}
\newtheorem{Lemma}[Theorem]{Lemma}
\newtheorem{Corollary}[Theorem]{Corollary}
\newtheorem{Proposition}[Theorem]{Proposition}
\newtheorem{Question}[Theorem]{Question}
\newtheorem{Conjecture}[Theorem]{Conjecture}
\newtheorem{Problem}[Theorem]{Problem}
\newtheorem{Note}[Theorem]{Note}
\renewcommand{\d}{\ {\rm d}}
\newcommand{\dist}{{\rm dist}}
\newcommand{\eps}{{\varepsilon}}
\newcommand{\N}{\mathbb N}
\newcommand{\R}{\mathbb R}
\newcommand{\Z}{\mathbb Z}
\renewcommand{\:}{\colon}
\newcommand{\bigwhere}{\ \big|\ }
\newcommand{\Bigwhere}{\ \Big|\ }
\definecolor{red}{rgb}{0.8,0,0}
\definecolor{green}{rgb}{0,0.6,0}
\title{Limit theory of discrete mathematics problems}
\author{Endre Csóka\thanks{Supported by Marie Sk\l odowska-Curie grant 750857, ERC grants 306493
and 648017 and MTA Rényi ``Lendület'' Groups and Graphs Research Group.}\\{\normalsize Alfréd Rényi Institute of Mathematics, Budapest, Hungary}}
\date{}
\begin{document}
\maketitle
\begin{abstract}

We show a general problem-solving tool called limit theory. This is an advanced version of asymptotic analysis of discrete problems when some finite parameter tends to infinity. We will apply it on three closely related problems.

Alpern's Caching Game (for 2 nuts) is defined as follows. The hider caches 2 nuts into one or two of $n$ potential holes by digging at most 1 depth in total. The goal of the searcher is to find both nuts in a limited time $h$, otherwise the hider wins.
We will show that if $h$ and $n/h$ are large enough, then very counterintuitively, any optimal hiding strategy should dig less than 1 in total, with positive probability. We will prove it by defining and analyzing a limit problem. Then we will partially solve the entire problem.
We will also have significant progress with two other problems: the Manickam--Miklós--Singhi Conjecture and the Kikuta--Ruckle Conjecture.

\end{abstract}

\section{Introduction}

For finite problems, it is a very common technique to solve infinite or continuous versions of it, and its solution may be useful for the original problem. Another very common technique is that we find the asymptotics of the solution when, e.g., a parameter tends to infinity. Limit theory is a combination of these two techniques. Namely, we define a limit problem about which we can prove that its solution is the limit of the solutions of a sequence of finite problems.
Finding a good limit problem can be difficult but very useful.

Limit theory techniques were already used in different areas. Statistical physics is essentially just limit theory in physics. In mathematics, probably Fürstenberg (1977, \cite{furstenberg1977ergodic}) was the first to use limit theory techniques, for reproving Szemerédi's Theorem. Lovász and Szegedy (2006, \cite{lovasz2006limits}) used this technique by introducing limit graphs called graphons. This also motivated the limit theory of many different discrete structures. However, in all these cases, limit theory was used only for special purposes. In this paper, we will show through several examples that this is indeed a general problem-solving tool.

Our most convincing but most difficult application is about Alpern's Caching Game (Section~\ref{Alpernsec}). We will find and analyze multiple limit problems, and we will get to some very interesting and highly counterintuitive results.
Then we will extend the Manickam--Miklós--Singhi Conjecture (MMS, Section~\ref{MMSsec}) using a simple but nontrivial limit problem. Finally, we will consider the Kikuta--Ruckle Problem (KR, Section~\ref{KRsec}), which is a generalization of the MMS Problem. We will understand and specify the KR Conjecture in a highly nontrivial way, using some limit problems.
About Alpern's Caching Game and the KR Problem, the solutions depending on the parameters looked chaotic, but our techniques will reveal the structures in them.

\section{Alpern's Caching Game} \label{Alpernsec}


\begin{Def}
Alpern's Caching Game $G(k, j, n, h)$ is defined as the following 2-player game between the hider and the searcher.
There are $n \in N$ (potential) holes and $k \in N$ nuts.
The hider places (caches) each of $k$ nuts into one of the holes in a positive depth, so that the total digging time (the sum of the depths of the deepest nut in each non-empty hole) must be at most 1.
The searcher cannot observe anything about the placement, but he can dig the hole at most depth $h \in \R^+$ in total.
A nut is found if the searcher dug at least as much in that hole as the depth of the nut.
The searcher can choose an adaptive digging strategy, continuously observing what and where he already found during the digging.\footnote{
Strategy means mixed (or randomized) strategy. If we use the discrete sigma-algebra on the set of strategies, then there is nothing to add to the definition, but we cannot use continuous distributions about the hiding depths and we will have only approximately optimal strategies. Or we can use the Lebesgue sigma-algebra on the set of hiding strategies, but in this case, we need some measurability criteria for the searching strategy in order to define winning probabilities. These are irrelevant issues about our results, and we will omit these technical details throughout the section.
}
The searcher wins if he finds at least $j \in N$ out of the $k$ nuts. Otherwise the hider wins.
\end{Def}


\medskip

This is a 2-player 0-sum game, therefore, there is a \textbf{value of the game} $v = v(k, j, n, h)$ with the following properties.
The searcher has a strategy which wins with probability at least $v$ against any pure (= deterministic) hiding strategy called \emph{placement}, and the hider has a strategy which wins with probability at least $1-v$ against any pure searching strategy. These are called optimal strategies.

\subsection{Previous results}

This problem was introduced by Alpern, Fokkink, Lidbetter and Clayton in \cite{alpern2012search}, a summary about the results and related questions of Alpern's Caching Game can be found in the survey book by Alpern, Fokkink, Leszek, Lindelauf \cite{alpern2013search}. More recent results are presented in \cite{csoka2016solution}.

The problem is solved for $k = j = 2$, $n \le 4$.\cite{alpern2012search, csoka2016solution} The solution for $k = j = 2$, $n = 4$ is the stepfunction shown in Figure~\ref{Alpern-4}. We show the nature of these optimal strategies by the following few cases.

\begin{figure}[]
\begin{center}
\includegraphics[width=\textwidth, trim=50mm 10mm 50mm 20mm]{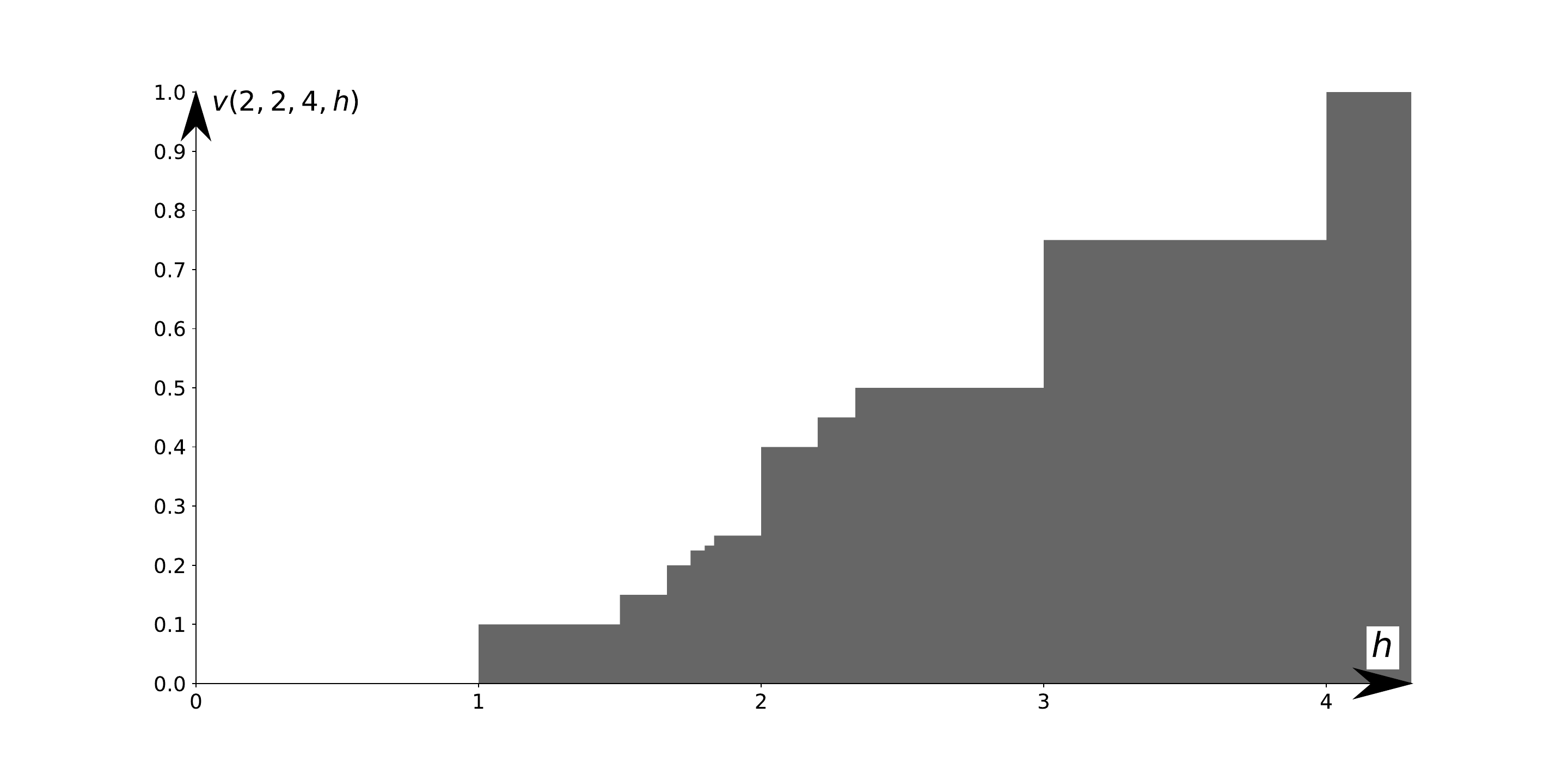}
\noindent
{\scriptsize
\begin{tabular}{|c||c|c|c|c|c|c|c|c|c|c|c|c|}
  \hline
  \phantom{\Big( } $h$ & $[0, 1)$ & $\big[1, \frac{3}{2}\big)$ & $\big[\frac{3}{2}, \frac{5}{3}\big)$ & $\big[\frac{5}{3}, \frac{7}{4}\big)$ & $\big[\frac{7}{4}, \frac{9}{5}\big)$ & $\big[\frac{9}{5}, \frac{11}{6}\big)$ & $ \big[\frac{11}{6}, 2\big)$ & $\big[2, \frac{11}{5}\big)$ & $\big[\frac{11}{5}, \frac{7}{3}\big)$ & $\big[\frac{7}{3}, 3\big)$ & $[3, 4)$ & 4 \\
  \hline
  \phantom{\Big( }$v(2, 2, 4, h)$ & $0$ & $\frac{1}{10}$ & $\frac{3}{20}$ & $\frac{1}{5}$ & $\frac{9}{40}$ & $\frac{7}{30}$ & $\frac{1}{4}$ & $\frac{2}{5}$ & $\frac{9}{20}$ & $\frac{1}{2}$ & $\frac{3}{4}$ & 1 \\
  \hline
\end{tabular}
}

\medskip
\caption{The solution value of $G(2,2,4,h)$.} \label{Alpern-4}
\end{center}
\end{figure}

\begin{itemize}
  \item If $h \ge 1$, then the searcher makes a uniform random guess for the two holes of the two objects (10 options including the cases that the two holes are the same), and he digs in those holes until he find them. If the guess was right, then he finds both. Therefore, he wins with probability at least $\frac{1}{10}$.

      On the other hand, the hider can choose the two holes uniformly at random out of the 10 options, and hide them at depths $\frac{1}{2}, \frac{1}{2}$ for two different holes, or $\frac{1}{2}, 1$ if he chose the same hole. If $h < \frac{3}{2}$, then the searcher cannot find both in more than 1 out of the 10 placements.
  \item If $h \ge \frac{3}{2}$, then the searcher does the same as for $h \ge 1$ except that if the first guess was right but the second guess was wrong, then he makes another guess for the second hole, and if the nut is there at depth at most $\frac{1}{2}$, then he finds it. If the searcher makes the choices with the right probabilities, then the searcher wins with probability at least $\frac{3}{20}$ for each placement.

      On the other hand, if the hider chooses depths $\frac{1}{3}, \frac{2}{3}$ for different holes, or $\frac{1}{3}, 1$ or $\frac{2}{3}, 1$ for same holes, then the searcher cannot find them in more than 3 cases if $h < \frac{5}{3}$. Therefore, if the hider chooses the placement uniformly at random out of the 20 possibilities, then the searcher wins with probability at most $\frac{3}{20}$.
  \item If $h < \frac{7}{3}$, then the optimal hiding strategy has the same structure as for $h \in \big[\frac{3}{2}, \frac{5}{3}\big)$, except that instead of depth $\frac{1}{3}$ (and $1 - \frac{1}{3} = \frac{2}{3}$), we are using depth $\frac{1}{6}$ or $\frac{1}{5}$ or $\frac{1}{4}$ or $\frac{2}{5}$ or $\frac{1}{2}$, and the hider uses a convex combination of these hiding strategies. If $h \ge \frac{7}{3}$, then the hider simply puts both nuts to the same hole. The optimal searching strategies are much more difficult and very irregular, those would require many pages to describe.
\end{itemize}

These results suggested that the solution even for $k = j = 2$ nuts (and for arbitrary $n$ and $h$) are chaotic and almost hopeless to characterize. Therefore, the researchers of the question claimed that they gave up trying to solve this problem. (Personal communication.)

\subsection{The breakthrough using limit theory}

Using limit theory, we will show that the solution is difficult but not that chaotic, and very different from what the earlier results suggested. First, we show a surprising property of the solution, and then we will partially solve the problem for $k = j = 2$. Some of the results will apply for $k = j > 2$ and $k = j + 1 > 2$.

\begin{Def}
In Alpern's Caching Game, we say that a placement of the nuts is \textbf{extremal} if
this requires total digging depth $1$ (the sum of the depth of the deepest nut in the non-empty holes is exactly 1). A (mixed) hiding strategy is called extremal if it is supported on extremal placements. The extremal version of the game $XG$ means that the hider must use an extremal strategy.
\end{Def}

Let $v_X$ always denote the same as $v$ with the extremal version of the game.

\begin{Question} \label{QX}
Does the hider always have an extremal strategy which is optimal? \\ Or (equivalently) does $v(k, j, n, h) = v_X(k, j, n, h)$ always hold?
\end{Question}

The answer was believed to be clearly positive, some of the authors of the problem did not even realize that they did not have a proper proof of it (according to private communications). Moreover, there was a conjecture presented in \cite{alpern2013search} about a difficult recursive property of the optimal strategies of the hider, which was in accordance to the solved cases, but which implied the positive answer to Question~\ref{QX}.

In the beginning, the author of this paper was almost sure about the positive answer, too. But limit theory analysis pointed out the opposite. On the top of it, further analysis showed that for a large class of parameters, the optimal hiding strategy uses non-extremal placements with probability 1.

The proof of our final results are presented in Subsection~\ref{Alpern-original-sol}. However, the primary goal of this section is not showing the final results and the proofs like pulling a rabbit out of a hat, but we want to show the technique of how we found these results. The same technique will be used for the other two problems in this paper.

\medskip

Define the limit of the game when the number of nuts to hide $k$ and to find $j$ are fixed, but the number of holes and the digging time $n, h \rightarrow \infty$, with an asymptotic ratio $n / h \rightarrow \lambda$.

\begin{Def}
The \textbf{limit game} $LG(k, j, \lambda)$ and its extremal version $XLG(k, j, \lambda)$ are defined as follows. The hider chooses a partitioning $a_1, a_2, ..., a_{k'} \in \Z^+$ where $\sum a_i = k$, and he chooses values (depths) $y_1, y_2, ... y_{k'} \in [0, 1]$ with $\sum y_i \le 1$ in $LG$, and $\sum y_i = 1$ in $XLG$. Then for $k'$ independent uniform random numbers $x_1, x_2, ... x_{k'} \in [0, \lambda]$, $a_i$ number of nuts are placed at $(x_i, y_i)$. The searcher observes nothing. Now the searcher should define a function $f_t(x): [0, 1] \times [0, \lambda] \rightarrow [0, 1]$ which is monotone increasing in both parameters and $\int f_1(x) \d x \le 1$. Then we evaluate $f$ meaning that the searcher gets to know the smallest $t^*$ so that $f_{t^*}(x_i) \ge y_i$ for some $i$. If there is no such nut position even for $f_1$, then the game ends. Otherwise the nuts at $(x_i, y_i)$ are found by the searcher, he gets to know the position and the number of them, and we remove these nuts. Then the searcher can change his function in the parameter interval $t \in (t^*, 1]$, and we re-evaluate $f$. The searcher wins if he finds at least $j$ nuts in total.
\end{Def}

\begin{Note}
We can get an equivalent problem by assuming that $\int f_t(x) \d x = t$ for all $t \in [0, 1]$. This will be convenient to assume when we are showing upper bounds on the value of the game. Also, we can omit the condition that $f$ is monotone increasing in the second coordinate, which will be useful for lower bounds.
\end{Note}

The following theorem shows the reason why we call it a limit game.
Denote the value of the limit game by $v(k, j, \lambda)$.

\begin{Theorem} \label{limitthm}
For any parameters $k, j, n, h \in \Z^+$,
\begin{equation*}
    v(k, j, n, h) : v\Big(k, j, \frac{n}{h}\Big) \in \Big[\Big(\frac{h-j}{h}\Big)^j, 1\Big].
\end{equation*}
Therefore, if $n_i / h_i \rightarrow \lambda$ and $n \rightarrow \infty$, then $v(k, j, n_i, h_i) \rightarrow v(k, j, \lambda)$.
The same applies for the extremal versions.
\end{Theorem}

This theorem will be used when we will disprove Question~\ref{QX}. But we will not need this for the final results.
Therefore, we present just a sketch of proof of the theorem.

\begin{proof}[Sketch of proof]
Notice first that both players can choose a uniform random permutation of the holes and apply his strategy on this permutation. If either of them does so, then whether the other player does it makes no difference in the expected result. Therefore, if we add to the rules that either or both players must use this randomization, then it does not change the value of the game, as both players can secure himself this expected score.


About the limit games $LG$ and $XLG$, notice that if two holes are dug to the same depth, and nothing was found in them so far, then it does not matter which one the searcher continues digging. Therefore, we can assume that according to the random ordering of the holes, their depths remain monotone decreasing during the search, excluding holes in which we have already found a nut.

Now let us see why do the values of the discrete problems converge to the values of the limit problems.

On one hand, any strategy of the searcher in the discrete game $G$ or $XG$ can be applied in the limit game by choosing $f_t$ in the interval $\big[\frac{i-1}{n},\frac{i}{n}\big)$ as the depth of the $i$th deepest hole after a total amount of digging $t$. This way the searcher can get at least the same score as in the discrete game.

On the other hand, a strategy of the searcher in $LG$ (or $XLG$) can be applied in $G$ (or $XG$) as follows. The searcher chooses a random ordering of the holes. Then he digs so as to have depth $f_t\big(\frac{i-1}{h-j}\big)$ in the $i$th hole, except that if a nut is found in a hole, then he digs that hole until depth $1$. He does it for all $t \in [0, 1]$, in increasing order. This way, the searcher can get at least $\big( \frac{h-j}{h} \big)^j$ times the score of the limit game $LG$ (or $XLG$).
\end{proof}

\begin{Def}
The \textbf{double limit game} $DLG(k, j)$ and its extremal version $XDLG(k, j)$ are defined as the limit game with $\lambda \rightarrow \infty$, as follows.
The hider chooses $k$ values $y_1, y_2, ... y_k \in [0, 1]$ where $\sum y_i \le 1$ in $DLG$, and $\sum y_i = 1$ in $XDLG$.
At the same time, the searcher defines a pure strategy of the limit game with $\lambda = \infty$.
Then for each subset $Q = \{ q_1, q_2, ..., q_j \} \subset \{1, 2, ..., k\}$ with a vector of positive real numbers $x_{q_1}, x_{q_2}, ... x_{q_j} \in \R^+$, the nuts are placed at $(x_{q_i}, y_{q_i})$. The score of the searcher is the $j$-dimensional measure of the vectors $x_{q_1}, x_{q_2}, ... x_{q_j}$ for which he wins by his strategy, summing up for all different $Q$. This is what the searcher aims to maximize and the hider aims to minimize, in expectation.
\end{Def}

Denote the value of $DLG(k, j)$ by $v(k, j)$.

\begin{Theorem} \label{2limitthm}
Fix $k \ge j \ge 2$, and consider a sequence of pairs $(n_i, h_i)$ so that $h_i \rightarrow \infty$ and $\frac{n_i}{h_i} \rightarrow \infty$. Then
\begin{equation*}
  \Big(\frac{n_i}{h_i}\Big)^j \cdot v(k, j, n_i, h_i) \rightarrow v(k, j).
\end{equation*}
The same applies for the extremal versions.
\end{Theorem}

\begin{proof}[Sketch of proof]
The strategy of the searcher in $LG$ can be applied in $DLG$. This shows one direction.

The other direction is a bit more technical. The first observation is that in $G$, if the hider puts more nuts in the same hole, and the searcher digs $\lfloor h \rfloor$ holes until depth 1, then this already provides him a score $\omega\big(\big(\frac{n_i}{h_i}\big)^{-j}\big)$. Therefore, in the optimal hiding strategy of the hider, the probability of such a placement tends to 0. Thus, the limit of the values does not change if we forbid such a placement in $G$.

The next observation is that the probability of finding more than $j$ nuts is $o\big(\big(\frac{n_i}{h_i}\big)^{-j}\big)$. Therefore, the probability of finding $j$ nuts is essentially the same as the expected number of $j$-element subsets of the nuts which would be found by the searcher if the other nuts had not been cached.

The optimal strategy of the searcher in $DLG$ can be applied in $LG$ with a large parameter $\lambda$, simply by restricting $f$ to $[0, 1] \times [0, \lambda]$. In $DLG$, if $x > \lambda$, then (by monotonicity) $f_1(x) < \frac{1}{\lambda}$ throughout the game. Therefore, this restricted strategy provides the same score unless if the depth of a nut is at most $\frac{1}{\lambda}$.

The following searching strategy is very efficient if the depth of a nut is at most $\frac{1}{\lambda}$. First, the searcher chooses $f_t(x) = \frac{1}{\lambda}$ if $x < \frac{t}{\lambda}$, and 0 otherwise. Then, after finding the first nut at $(x_1, y_1)$, then he chooses $f_1(x) = \frac{1}{\lambda}$ if $x < x_1$, $f_1(x) = 1$ if $x \in [x_1,\ x_1 + 1 - \lambda x_1]$ and 0 otherwise. If we use this strategy with probability $O\big(\frac{1}{\lambda}\big)$ and the strategy $f$ restricted to $[0, 1] \times [0, \lambda]$ otherwise, then for all possible hiding strategy, this mixed searching strategy in $LG$ will be (at least) almost as good as the original strategy in $DLG$.

The same argument works for the extremal games, as well.
\end{proof}


Consider any optimal strategy of the hider in $XDLG(2, 2)$. This can be identified with the probability measure $\mu$ of the depth of a random nut.

\begin{Lemma} \label{round2opt}
In $XDLG(2, 2)$, if the searcher with any optimal strategy finds a nut at depth $y \in \operatorname{supp}(\mu)$,
then he almost always changes $f$ so as to maximize the size of the interval $f_1^{-1}(1-y) = \big\{x \in [0, \infty) : f_1(x) = 1 - y\big\}$.
\\ More precisely, if the hider chooses a placement randomly from $\mu$, and the searcher plays optimally, then finding a nut and changing $f$ not in the suggested way happens with probability 0.
\end{Lemma}

\begin{proof}[Sketch of proof]
  Otherwise the searcher could improve his score against an optimal hiding strategy.
\end{proof}

Two pure searching strategies are called \emph{equivalent} if they get the same score against any hiding strategies including the non-extremal ones. Two (mixed) searching strategies are equivalent if there exists a measure preserving bijection between the two distributions of pure strategies such that the corresponding pure strategies are equivalent except for a 0-measure set.

\begin{Lemma} \label{round1opt}
For any optimal strategy of the searcher in $XDLG(2, 2)$, there is an equivalent one which starts with a function $f$ satisfying that for all $(t, x) \in [0, 1] \times [0, \infty)$,
\begin{equation} \label{round1opt-eq}
f_t(x) = 0 \hskip 5mm \text{or} \hskip 5mm \forall \eps > 0\: \mu\big[f_t(x) - \eps,\ f_t(x)\big] > 0.
\end{equation}
\noindent (This is a little more restrictive than $f_t(x) \in \{0\} \cup \operatorname{supp}(\mu)$.)
\end{Lemma}

\begin{proof}[Sketch of proof]
Assume that $\int f_t(x) \d x = t$. Let
\begin{equation*}
q_t(x) = \sup\Big\{y \le f_t(x) \Bigwhere (y = 0)  \text{ or } \big(\forall \eps > 0: \mu[y - \eps,\ y] > 0\big) \Big\}.
\end{equation*}
Clearly, $f$ and $q$ get the same score against the hiding strategy $\mu$.
Now, after some case analysis, we can get to the following conclusion. Either $q$ is an equivalent searching strategy to $f$, or we can find an $\hat{f} \ge q$ which provides higher score than $f$ against $\mu$.
\end{proof}


\begin{Theorem} \label{limitcounter}
$v_X(2, 2) < v(2, 2)$.
\end{Theorem}


\begin{Corollary}
Theorems \ref{2limitthm} and \ref{limitcounter} imply the existence of infinitely many counterexamples for Question~\ref{QX}. \qed
\end{Corollary}

\begin{proof}[Sketch of proof of Theorem~\ref{limitcounter}]

Assume by contradiction that $v_X(2, 2) = v(2, 2) = v$. 
Consider an optimal strategy $S'$ of the searcher in $DLG$. This provides the expected score at least $v$ against any strategy of the hider. Therefore, $S'$ provides an expected score at least $v$ against all extremal hiding strategies, therefore, $S'$ is an optimal searching strategy in $XDLG$, as well. Lemma~\ref{round1opt} says that there exists an $S$ equivalent to $S'$ satisfying \eqref{round1opt-eq}. Consequently, $S$ is an optimal searching strategy in $DLG$ which satisfies \eqref{round1opt-eq}. This can be represented by the first-round searching function $f^S$.

$\mu$ and $S$ are optimal hiding and searching strategies in both $DLG$ and $XDLG$, because they provide an expected score at least and at most $v$, respectively, against any searching strategy.
According to Lemma~\ref{round2opt}, the strategy of the searcher is represented by the first function $f$ he chooses.

\textbf{Case 1.} $\operatorname{supp}(\mu) = [0, 1]$. (We conjecture this to be true.)

In $XDLG$, a randomization between the following two pure strategies of the searcher provides him a score at least $\sqrt{2} + 1$.
\begin{itemize}
\item $f_t(x) = 1$ if $x < t$, otherwise 0. \hskip 3mm -- \hskip 3mm This scores $\frac{1}{2 y_1 y_2}$.
\item $f_t(x) = \frac{1}{2}$ if $x < 2t$, otherwise 0. \hskip 3mm -- \hskip 3mm This scores $\frac{1}{y_2} + \frac{1}{2y_2^2}$, where $y_2 \ge y_1$.
\end{itemize}
Therefore,
\begin{equation} \label{v>1+sqrt2}
v \ge \sqrt{2} + 1.
\end{equation}

Consider an arbitrary strategy of the searcher in $XDLG$.
Let $t^*$ denote the time point when $\big(\frac{4}{3}, \frac{1}{4}\big)$ is dug, and let $s_1$ and $s_2$ denote the length of holes which had been dug to the depth at least $\frac{1}{4}$ before or by $t^*$, respectively. Formally,
\begin{equation*}
t^* = \sup_{t \in [0, 1]} \bigg\{ f_t \Big(\frac{4}{3}\Big) < \frac{1}{4} \bigg\}, \hskip 8mm
s_1 = \inf_{x \in \R} \bigg\{ \sup_{t \in [0, 1]} \Big\{ f_t (x) < \frac{1}{4} \Big\} = t^* \bigg\}, \hskip 8mm
s_2 = \sup_{x \in \R} \Big\{ f_{t^*} (x) \ge \frac{1}{4} \Big\}.
\end{equation*}

Now consider the score it provides against the hiding strategies $y_1 = \frac{1}{4} - \eps$ and $y_2 = \frac{1}{4}$ when $\eps \rightarrow 0+$. In the limit, the searcher can win only if one of the followings are satisfied.
\begin{itemize}
  \item $\max(x_1, x_2) \le s_1$ and $x_1, x_2 \le \frac{4}{3}$
  \item $x_1 \in [s_1, s_2]$ and $x_2 \in \big[s_1,\ 2 - \frac{s_2}{2}\big]$
\end{itemize}
The total area of the set of these pairs $(x_1, x_2) \in [0, \infty)^2$ is 
\begin{equation*}
2 \cdot \frac{4}{3} s_1 - s_1^2 + \max\Big(0,\, (s_2 - s_1) \big(2 - \frac{s_2}{2} - s_1\big)\Big) \le 2
\end{equation*}
with equation if $s_1 = 0$ and $s_2 = 2$. This contradicts with \eqref{v>1+sqrt2}.


\medskip

\textbf{Case 2.} $\operatorname{supp}(\mu) \ne [0, 1]$. We only consider the case when there exist $0 < a < b < 1$ satisfying $\mu(a, b) = \mu(1-b,\ 1-a) = 0$, but $\mu(a), \mu(b) > 0$. The proofs of the other cases are essentially the same.

Compare the score of the two hiding strategies $H_1 = (a,\ 1 - a)$ and $H_2 = \big(a,\ 1 - \frac{a+b}{2}\big)$ against $S$. Compare them when the same pairs of holes $(x_1, x_2)$ are chosen. Lemma~\ref{round1opt} implies that the first nut is found at the same time point and in the same hole in the two cases. If this is the first nut (at $(x_1, a)$), then Lemma~\ref{round2opt} shows that the searcher digs either both or none of the two points $\big(x_2,\ 1 - \frac{a+b}{2}\big)$ and $(x_2,\ 1 - a)$, therefore, $H_1$ and $H_2$ are equally good in this case. If the second nut was found for first, then the other nut is in the same place $(x_1, a)$ in the two cases, and Lemma~\ref{round2opt} shows that $S$ plays optimally against $H_1$ after finding this nut. This implies that $S$ gets at most as much score against $H_2$ than against $H_1$.

The optimality of $S$ and $\mu$ with the fact that $\mu(1-a) = \mu(a) > 0$ imply that $S$ gets the score $v$ against $H_1$. But $S$ gets at least $v$ against all hiding strategies. Therefore, $S$ gets the score $v$ against $H_2$, as well. This means that if $S$ finds the second nut, then he plays optimally also against $H_2$, meaning that it no longer digs deeper than $a$.

We can use the same argument with $H^{\prime}_2 = \big(\frac{a+b}{2},\ 1 - a\big)$. Therefore, if the hider chooses depths $\big(\frac{a+b}{2},\ 1 - \frac{a+b}{2}\big)$, then $S$, after finding one nut, completely fails to dig at the right depth for the other nut, and hereby the searcher gets the score 0. This contradicts with the optimality of $S$.
\end{proof}


\subsection{Solution for the double limit game} \label{sol-dlg}

For first, it seemed that the extremal double limit game is easier to solve than the double limit game. The reason of it is that the strategy of the hider is a probability distribution on an interval in $XDLG$, and on a two-dimensional domain in $DLG$. And therefore, the author expected that $XDLG$ is easier to solve than $DLG$. But the truth seems to be the opposite.

The extremal double limit game is not solved yet. If somebody tries to solve it, then the author suggests considering the searcher's function $f_{t - 1}(x) = \chi(x < t) \cdot \big(1 - \frac{1}{t}\big)$ with probability more than $\frac{1}{2}$. The searcher's other pure strategy may start with $f_t(x) = \chi(x < t) g(x)$ for $t \in [0, \eps]$ with a function $g(0) = 1$, $g'(0) \approx -0.1$. The author believes that $v_X(2, 2) \approx 2.8$.

On the other hand, the double limit game has a surprisingly simple solution, as follows.

\begin{Theorem} \label{limitsolution}
If $\lambda \ge k = j$, then $v(k, j, \lambda) = v(k, k, \lambda) = \frac{k!}{\lambda^k}$, and therefore, the value of the double limit game is $v(k, k) = k!$~.
\end{Theorem}

The proof of Theorem~\ref{limitsolution} consists of Propositions~\ref{hiderstrategy} and \ref{searchersrtategy}, showing optimal strategies for the hider and the searcher.

\begin{Proposition} \label{hiderstrategy}
In $LG(k, k, \lambda)$ with $\lambda \ge k$, if the hider chooses a uniform random point $(y_1, y_2, ..., y_k)$ from the simplex $y_i > 0\ (\forall i \in \{1, 2, ..., k\})$, $\sum y_i \le 1$, then the searcher wins with probability at most $\frac{k!}{\lambda^k}$.

\noindent
(Moreover, the searcher wins with exactly this probability provided that he always searches for nuts in places where it is possible to find one (e.g. never in depth $> 1$).)

\end{Proposition}

\begin{proof}
Consider the measure space $T$ of $k$ time points $0 \le t_1 \le t_2 \le ... \le t_k \le 1$.
For each strategy of the searcher, consider the measure space $S$ of all allocations of the nuts for which the searcher would find all nuts.
To each allocation in $S$, we can assign the vector of time points when the searcher finds the nuts. This is an injective mapping from $S$ to $T$, and the inverse of it is measure-preserving. Therefore, the measure of $S$ is at most the measure of $T$.
The allocation of the nuts is a uniform random point $(x_1, y_1, x_2, y_2, ..., x_k, y_k)$ from the set $x_i \in [0, \lambda]$, $y_i > 0$, $\sum y_i \le 1$, but this set is factored by the $k!$ permutations of the $k$ indices. Therefore, the winning probability of the searcher is at most
\begin{equation*}
\frac{\operatorname{Vol}\big((t_1, t_2, ..., t_k) \bigwhere 0 \le t_1 \le t_2 \le ... \le t_k \le 1 \big)}{\frac{1}{k!} \cdot \lambda^k \cdot \operatorname{Vol}\big((y_1, y_2, ..., y_k) \bigwhere (\forall i\: y_i \ge 0), \sum y_i \le 1\big)}
\end{equation*}
\begin{equation} \label{limitproofeq}
= \frac{k! \operatorname{Vol}\big((t_1, t_2, ..., t_k) \bigwhere t_1, t_2 - t_1, t_3 - t_2, ..., t_k - t_{k-1} \ge 0; t_k \le 1 \big)}{\lambda^k \cdot \operatorname{Vol}\big((y_1, y_2, ..., y_k) \bigwhere (\forall i\: y_i \ge 0), \sum y_i \le 1\big)}
= \frac{k!}{\lambda^k}. \qedhere
\end{equation}
\end{proof}

\begin{Proposition} \label{searchersrtategy}
In $LG(k, k, \lambda)$ for $\lambda \ge k$, the searcher can win with probability at least $\frac{k!}{\lambda^k}$ by the following strategy.

He digs parallelly in a unit interval, and if he finds a nut, then he goes to the next interval. Formally, if he found so far $q$ nuts at the points in time $t_1, t_2, ..., t_q$, then with $t_0 = 0$, he chooses the function
\begin{equation*}
  f_t(x) = \sum_{i=1}^{q} \big(\chi(i - 1 \le x < i) \cdot (t_i - t_{i-1})\big) + \chi(q \le x < q + 1) \cdot (t - t_q).
\end{equation*}
\end{Proposition}

\begin{proof}
If there is a group of nuts in each of the intervals $[0, 1)$, $[1, 2)$, ... $[k' - 1, k')$, then the searcher finds all nuts. This has a probability $\frac{k'!}{\lambda^{k'}} \ge \frac{k!}{\lambda^k}$.
\end{proof}

\begin{Theorem}\label{doublelimitk-1}
If $k \le 3$, then $v(k, k-1) = k!$~.
\end{Theorem}

\begin{proof}
The same strategy of the searcher as in $DLG(k, k)$ provides a lower bound of $k \cdot (k-1)! = k!$~.

If the hider chooses $(y_1, y_2, ..., y_k)$ uniformly randomly from the simplex $y_i \ge 0$, $\sum y_i = 1$, then the joint distribution of the $k - 1$ variables, say $(y_1, y_2, ..., y_{k - 1})$ is just a uniform random vector from the simplex $y_i \ge 0$, $\sum\limits_{i = 1}^{k-1} y_i \le 1$. Therefore, this shows an upper bound of $k \cdot (k-1)! = k!$~, as well.
\end{proof}

\subsection{The discrete limit game}

As we will see, the solution for the double limit game for $k = j$ is conjectured and partially proved to work if $h$ is larger than a constant. Therefore, it will be useful to define a limit game when $n \rightarrow \infty$ and $k$, $j$ and $h$ are constant. 

\begin{Def}
The \textbf{discrete limit game} $DG(k, j, h)$ is defined as follows. The hider chooses $k$ values $y_1, y_2, ... y_k \in [0, 1]$ where $\sum y_i \le 1$. Then for some subset $Q = \{ q_1, q_2, ..., q_j \} \subset \{1, 2, ..., k\}$ with a vector of different positive integers $x_{q_1}, x_{q_2}, ... x_{q_j} \in \Z^+$, the nuts are placed at $(x_{q_i}, y_{q_i})$. The searcher observes nothing. Independently from this, the searcher defines a strategy of the limit game with $\lambda = \infty$. The score of the searcher is the number of the vectors $x_{q_1}, x_{q_2}, ... x_{q_j}$ for which he wins, summing up for all different $Q$. This is what the searcher aims to maximize and the hider aims to minimize, in expectation. The value of this game is denoted by $v^*(k, j, h)$.
\end{Def}

\begin{Theorem}
\begin{equation*}
v^*(k, j, h) = \lim_{n \rightarrow \infty} v(k, j, n, h).
\end{equation*}
\end{Theorem}

\begin{proof}[Sketch of proof]
The proof will be similar to the proof for LG in Theorem~\ref{limitthm} with the following step from Theorem~\ref{2limitthm}.

In $G(k, j, n, h)$, by hiding the nuts into random holes with depths $\frac{1}{k}$, we can get that $v(k, j, n, h) = O(n^{-j})$. On the other hand, the searcher can get a score $\Omega(n^{-j+1})$ against placements that use the same hole for at least two nuts. E.g.\ $f_1(x) = \chi (x < 1)$ makes the job. Therefore, in any optimal hiding strategy, the probability that all nuts are placed in different holes should tend to 1.

Similar argument shows that if $n \rightarrow \infty$ and $\eps \rightarrow 0$, then the probability that the hider chooses a depth less than $\eps$ should also tend to 0. This implies that forbidding the searcher to dig in more than $\frac{1}{\eps}$ holes has a negligible effect on the value of the game.

Now we can convert the optimal hiding and searching strategy of $DG$ to strategies of $G$ with almost the same minimax and maximin scores with an error tending to 0. This works in the same way as in Theorem~\ref{limitthm}.
\end{proof}

\begin{Theorem} \label{discretelimitthm}
$v(k, j, n, h) \le {n + j - 1 \choose j} \cdot v^*(k, j, h)$
\end{Theorem}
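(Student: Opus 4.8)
The plan is to let the hider in the finite game imitate an optimal hider of the discrete limit game and then control the searcher by a union bound over the possible hole-patterns of the $j$ nuts that end up being found.

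First I would fix an optimal hiding strategy $\mu^*$ of the discrete limit game with parameters $k,j,h$: by the value property of $v^*(k,j,h)$ this is a distribution on depth vectors $(y_1,\dots,y_k)$ with $\sum y_i\le 1$ against which every strategy of the limit game with $\lambda=\infty$ scores at most $v^*(k,j,h)$, and since that payoff is symmetric in the $k$ nut labels I may take $\mu^*$ invariant under permutations of $\{1,\dots,k\}$. In the finite game the hider now samples $(y_1,\dots,y_k)\sim\mu^*$ and places nut $i$ at depth $y_i$ so that the $k$ nuts occupy $k$ distinct uniformly random holes (we may assume $n\ge k$, the remaining cases being easy); this is admissible because the deepest nuts then sum to $\sum y_i\le 1$.

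Next I would fix an arbitrary searcher strategy and, by the symmetrisation observation in the proof of Theorem~\ref{limitthm}, assume it begins with a uniformly random relabelling of the holes, so that the play is exchangeable in the holes; I translate its digging into a strategy $f$ of the limit game with $\lambda=\infty$ exactly as in that proof (at search time $t$ let $f_t$ on $[i-1,i)$ be the depth dug so far in the $i$-th deepest hole, with the convention that on finding a nut the searcher digs that hole to depth $1$). The searcher wins iff he catches some $j$-element set $Q$ of nuts, and the holes of such a set realise one of the $\binom{n+j-1}{j}$ distributions of $j$ nuts among $n$ holes, so by the union bound the winning probability is at most $\binom{n+j-1}{j}\max_D p_D$, where $p_D$ is the probability that the searcher catches $j$ nuts occupying the hole-pattern $D$. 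The heart of the argument is the estimate $p_D\le v^*(k,j,h)$: a single pattern $D$ in the finite game corresponds to a single position-configuration of $j$ nuts in the discrete limit game, and --- after noting that against this hider the $k-j$ nuts outside $Q$ lie in holes that are independent of and exchangeable with the holes of $Q$, so that finding them conveys no positional information and only consumes digging budget --- one sees that, summed over the possible $Q$, the event "some $j$ caught nuts realise $D$" contributes to the payoff of $f$ against $\mu^*$ an amount at least $p_D$, while that payoff is at most $v^*(k,j,h)$. Combining these gives $v(k,j,n,h)\le\binom{n+j-1}{j}\,v^*(k,j,h)$.

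I expect the inequality $p_D\le v^*(k,j,h)$ to be the main obstacle, for two intertwined reasons. The searcher in the finite game is adaptive and, unlike the discrete-limit searcher who faces only $j$ nuts, learns the positions and depths of the other $k-j$ nuts; one must show that against the chosen fully symmetric, distinct-holes hider this extra information is worthless, so that the reduction to the discrete limit game is legitimate. And one must keep the bookkeeping honest --- each of the $\binom{n+j-1}{j}$ hole-patterns counted once, and the passage from the adaptive finite strategy to the single limit-game function $f$ faithful; the cleanest route is probably to first reduce, just as in the proof of the theorem immediately preceding this one, to hiders using distinct holes and depths bounded away from $0$, where the correspondence between finite plays and discrete-limit configurations is exact.
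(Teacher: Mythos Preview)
Your approach diverges from the paper's at the very first move, and that is where the gap lies. The paper does \emph{not} place the $k$ nuts in distinct holes. Its hider samples a uniformly random \emph{multiset} of size $k$ from the $n$ holes (one of the $\binom{n+k-1}{k}$ possibilities) and, when several nuts share a hole, stacks them so that the $k$ successive gaps---the depth of the top nut in each occupied hole together with the inter-nut distances---form a random permutation of the DG depths $(y_1,\dots,y_k)$. The matching searcher transformation $G\to\text{DG}$ is then: whenever a nut is found, stop digging that hole and open a fresh one. Under this coupling, ``dig depth $d$ past a just-found nut in $G$'' becomes ``dig a new hole to depth $d$ in DG'', so the transformed DG-searcher finds a $j$-subset exactly when the $G$-searcher does; the count of found $j$-subsets is literally preserved, and no union bound or information-loss argument is needed.

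Your distinct-holes hider creates two problems. First, a bookkeeping mismatch: if all $k$ nuts occupy different holes then so does every $j$-subset of them, so the relevant hole-patterns number only $\binom{n}{j}$, not $\binom{n+j-1}{j}$; the extra multiset patterns you sum over all have $p_D=0$, and you are padding the bound to match the statement rather than deriving it. Second, and more seriously, the inequality $p_D\le v^*(k,j,h)$ that you correctly flag as the main obstacle really is one, and your sketch does not close it. In the discrete limit game only the $j$ nuts of $Q$ are ever placed, so the DG searcher never encounters the other $k-j$; the adaptive $G$-searcher, by contrast, learns both the holes and the \emph{depths} of those extra nuts, which are drawn from the same $\mu^*$-marginals as the target nuts and are therefore genuinely informative, not merely budget-consuming. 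The paper's stacking trick is precisely what neutralises this asymmetry: every nut encounter in $G$---target or not---is mapped to the start of a new hole in DG, so no encounter is an information leak without a DG counterpart.
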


\begin{proof}[Sketch of proof]
First, we bound the winning probability by the expected number of $j$-element subsets of nuts that would have been found by the searcher if the other nuts had not been cached.

Consider a hiding strategy of $DG$, and choose the same hiding strategy in G in the following sense. The hider chooses a uniformly random $k$-element multiset of holes out of the ${n + k - 1 \choose k}$ possibilities. We put the nuts into these holes in different depths as follows. We will have $k$ distances: the depth of the first nut in each non-empty hole, and the additional depths of the further nuts from the previous nuts. These depths will be a uniform random permutation of the random depths used by the optimal hiding strategy in $DG$.

Now any strategy of the searcher in $DG$ can be transformed to a strategy in $G$ by instead of digging a hole after finding a nut, the searcher digs in a new hole. This transformed strategy wins in the same number of cases.
\end{proof}

\begin{Conjecture} \label{discreteopt2}
If $k = j$ and for any $h$, if $n$ is large enough, then the bound in Theorem~\ref{discretelimitthm} is sharp, and hereby the transformed hiding strategy in G is optimal.
\end{Conjecture}

\begin{Question}
Is Conjecture~\ref{discreteopt2} true for all $k = j > 2$?
\end{Question}

\subsection{Solutions for the original problem for $k = j = 2$} \label{Alpern-original-sol}

In this section, unless we say the opposite, we \textbf{always assume that} $\boldsymbol{k = j = 2}$, namely, the hider caches two nuts, and the searcher aims to find both of them.

The following definition will simplify further analysis. 

\begin{Def}
A \emph{basic strategy} denoted by a pair $(y_1, y_2)$ means the following. The searcher chooses two holes, maybe the same hole twice, uniformly randomly out of the ${n+1 \choose 2}$ choices. If he chooses two different holes $x_1 \neq x_2$, then he caches the two nuts to $(x_1, y_1)$ and $(x_2, y_2)$, or $(x_1, y_2)$ and $(x_2, y_1)$, with the same probabilities. If he chooses the same hole $x$, then he caches the nuts to $(x, y_1)$ and $(x, 1)$, or $(x, y_2)$ and $(x, 1)$, randomly.\footnote{The strategy $(1, 0)$ can be replaced by the strategy of caching both nuts in the same random hole in depth 1.}
\end{Def}

\begin{Conjecture}
For any $n$ and $h$, there always exists an optimal hiding strategy which is a mixture of basic strategies.
\end{Conjecture}


In the light of this conjecture, we can use the solution of the double limit game (Theorem~\ref{limitsolution}) for the original game as follows.

\begin{Theorem} \label{discretedoublelimit}
If the hider uses the strategy $(y_1, y_2)$ for a uniform random pair satisfying $y_1 \ge 0$, $y_2 \ge 0$, $y_1 + y_2 \le 1$, then the searcher wins with probability at most $\frac{2h^2}{n(n+1)}$.
If $h \in \Z^+$ and $h \le \frac{n + 1}{2}$, then the bound is sharp, namely, the value of the game is $\frac{2h^2}{n(n+1)}$.
\end{Theorem}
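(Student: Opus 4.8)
The plan is to prove the two halves of the statement separately: an upper bound on the searcher's winning probability against the displayed hiding strategy — call it $H^{*}$ (the ``pair strategy'' of Conjecture~\ref{pairdef} with the hole-pair chosen uniformly among the $\binom{n+1}{2}$ options and $(y_1,y_2)$ uniform on the simplex) — and, when $h\in\Z^{+}$ and $h\le\frac{n+1}{2}$, a matching lower bound given by an explicit searching strategy. Since $\binom{n+1}{2}=\frac{n(n+1)}{2}$, the claimed value is $h^{2}/\binom{n+1}{2}$, the discrete counterpart of the value $2/\lambda^{2}=k!/\lambda^{k}$ of the double limit game (Theorem~\ref{limitsolution}), with $n^{2}$ replaced by $n(n+1)$ precisely because of the number of hole-choices.

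For the upper bound I would run the measure-comparison argument of Proposition~\ref{hiderstrategy} in the discrete setting. Fix a pure searching strategy $\sigma$ (mixed ones average). Before its first find $\sigma$ follows a fixed schedule; encode it by the reach time $\tau(x,y)\in[0,h]\cup\{\infty\}$, the total amount dug by the moment hole $x$ first attains depth $y$. The identity ``total digging $=$ elapsed time'' gives the key fact that, summed over holes, the map (hole, depth of first-found nut)$\mapsto$(time of that find) pushes the natural measure on the reachable region forward to Lebesgue measure on $[0,h]$; this is the discrete analogue of the normalization $\int f_t\,dx=t$ and, crucially, it holds for all $\sigma$, including highly parallel ones. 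If the first nut is found in hole $x_1$ at depth $y_1$ at time $t_1=\tau(x_1,y_1)$, let $\beta=\beta(x_1,y_1)\in[0,1-y_1]$ be the further depth $\sigma$ digs in $x_1$ afterwards; then $\sigma$ spends $\beta$ re-digging $x_1$ and at most $(h-t_1)-\beta$ on other holes, so the reachable second-nut positions in a different hole have measure $\le (h-t_1)-\beta$, whereas a second nut in hole $x_1$ — which under $H^{*}$ lies at depth exactly $1$ — is caught only if $\beta=1-y_1$. Writing $B=\sum_{x_1}\int\beta(x_1,y_1)\,dy_1$ and integrating against the pushforward identity, the winning distinct-hole allocations have measure $\le\frac{h^{2}}{2}-B$, hence $\Pr[\text{win}\cap\text{distinct holes}]\le\frac{2h^{2}-4B}{n(n+1)}$, while $\Pr[\text{win}\cap\text{same hole}]\le\frac{4B}{n(n+1)}$ (the density of the shallower depth in a same-hole configuration is $\le 2$). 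Adding, the $B$'s cancel and $\Pr[\text{win}]\le\frac{2h^{2}}{n(n+1)}$.

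I expect this cancellation to be the heart of the proof and its main obstacle: the distinct-hole and same-hole contributions cannot be bounded in isolation, because a $\sigma$ that catches same-hole placements must re-dig the already-found hole and that budget is exactly what it gives up against distinct placements; the auxiliary quantity $B$ is what makes the trade-off bookkeepable, and the argument is only as sound as the pushforward/normalization lemma for arbitrary (in particular parallel) searching strategies and the observation that $H^{*}$'s hidden second nut always sits at full depth $1$.

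For sharpness under $h\in\Z^{+}$, $h\le\frac{n+1}{2}$, I would take the optimal searching strategy to be a mixture of two strategies, both applied after a uniformly random relabelling of the holes and both using only $2h-1\le n$ holes. Strategy $A$ is the discretization of Proposition~\ref{searchersrtategy}: dig a block of $h$ holes in parallel and, on the first find, switch to digging a disjoint block of $h$ holes in parallel. Strategy $G$ is the variant: dig a block of $h$ holes in parallel and, on finding a nut in hole $x$, dig $x$ all the way to depth $1$ while in parallel digging a further disjoint block of $h-1$ holes. A direct budget check — using $d_1+d_2\le 1$ for distinct-hole placements and $d_2\le 1$ for the deeper nut of a same-hole placement — shows that, irrespective of the depths, against any distinct-hole placement $A$ wins with probability $\frac{2h^{2}}{n(n-1)}$ and $G$ with probability $\frac{2h(h-1)}{n(n-1)}$, while against any same-hole placement $A$ wins with probability $0$ and $G$ with probability $\frac{h}{n}$. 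Using $A$ with probability $\alpha=\frac{n+1-2h}{n+1}$ and $G$ with probability $\frac{2h}{n+1}$ — the former is $\ge 0$ precisely because $h\le\frac{n+1}{2}$, it forces $2h\le n$ whenever it is positive, and when it vanishes only $G$ is used and $2h-1=n$ fits — makes both cases come out to exactly $\frac{2h^{2}}{n(n+1)}$. Since this holds against every pure hiding strategy, the searcher guarantees $\frac{2h^{2}}{n(n+1)}$, and with the upper bound the value is pinned. The only fiddly points here should be the budget arithmetic for $G$ (e.g.\ $h d_1+(1-d_1)+(h-1)(1-d_1)=h$ in the winning branch) and the off-by-one behaviour near $h=\frac{n+1}{2}$; the conceptual work is all in the upper bound.
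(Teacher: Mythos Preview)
Your proposal is correct.

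For the lower bound (sharpness when $h\in\Z^+$, $h\le\frac{n+1}{2}$), your mixture of strategies $A$ and $G$ with weights $\frac{n+1-2h}{n+1}$ and $\frac{2h}{n+1}$ is exactly the paper's searching strategy (Proposition~\ref{searcherstrategy2}): there the searcher digs $h$ random holes in parallel, and after the first find chooses $h$ ``new'' holes that include the current hole with probability $\frac{2h}{n+1}$ and are otherwise drawn from the $n-h$ untouched holes. Your $G$ is the ``include current hole'' branch and your $A$ is the other branch, and the arithmetic matches line for line.

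For the upper bound the two arguments diverge. The paper does not argue directly for the hiding strategy of Conjecture~\ref{pairdef}; instead it invokes Theorem~\ref{hiderstrategyk}, whose hiding strategy parametrizes the same-hole case by \emph{gap} depths $(y'_1,y'_2)$ uniform on the simplex, so that the deeper nut sits at $y'_1+y'_2\le 1$ rather than at exactly $1$. For that strategy both the distinct-hole and the same-hole configurations carry two continuous parameters, and the injection ``winning configuration $\mapsto$ (time of first find, time of second find)'' of Proposition~\ref{hiderstrategy} applies uniformly to give the bound $h^2/\binom{n+1}{2}$ in one stroke. The statement for the Conjecture~\ref{pairdef} strategy then follows by the (unstated but immediate) monotonicity that moving the deep nut from $y'_1+y'_2$ down to depth $1$ can only hurt the searcher. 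Your route instead attacks the Conjecture~\ref{pairdef} strategy head-on: because its same-hole placement has only one free continuous parameter (the shallow depth, with the deep nut pinned at $1$), the two-dimensional time-injection cannot cover both cases at once, and you compensate with the auxiliary re-digging quantity $B$, which exactly trades off the distinct-hole loss $-4B/(n(n+1))$ against the same-hole gain $+4B/(n(n+1))$. Your argument is correct and self-contained for $k=j=2$; the paper's is cleaner and is what generalizes to all $k=j\ge 2$ in Theorem~\ref{hiderstrategyk}.
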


The proof will be very similar to the proof of Theorem~\ref{limitsolution}. It will follow from Theorem~\ref{hiderstrategyk} and Theorem~\ref{searcherstrategy2}, about the strategies of the hider and the searcher.

\begin{Conjecture} \label{discretedoublelimitconj}
The bound $\frac{2h^2}{n(n+1)}$ in Theorem~\ref{discretedoublelimit} is sharp if $\frac{h^2}{\lfloor h \rfloor} \le \frac{n+1}{2}$ and either $h \ge 3$ or $h = 3 - \frac{1}{q}$ for any $q \in \Z^+ \setminus \{3\}$.
\end{Conjecture}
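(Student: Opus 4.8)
We describe a possible route to this conjecture.

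The value is already known to be at most $\tfrac{2h^2}{n(n+1)}$ by Theorem~\ref{discretedoublelimit} (the hider's strategy of Theorem~\ref{hiderstrategyk}), so the whole task is to produce, under the stated hypotheses on $n$ and $h$, a searcher strategy guaranteeing at least $\tfrac{2h^2}{n(n+1)}$ against every hider. Taking Conjecture~\ref{pairdef} as given (as throughout this subsection), it suffices to beat all ``pair'' hiders, and since the searcher's winning probability is linear in the hider's mixture it is enough to handle each pure pair-hider, i.e.\ each pure strategy indexed by a point $(y_1,y_2)$ of the simplex $\Delta=\{y_1,y_2\ge 0,\ y_1+y_2\le 1\}$. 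Averaging this requirement over $\Delta$ and comparing with Theorem~\ref{discretedoublelimit} forces the strategy we seek to be an \emph{equalizer}: its winning probability against the pair-hider $(y_1,y_2)$ must equal $\tfrac{2h^2}{n(n+1)}$ for (almost) every $(y_1,y_2)\in\Delta$. Constructing such an equalizer is the goal; it would also show that the uniform pair-hider is optimal.

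The building blocks would be ``structured'' pure searcher strategies generalizing Proposition~\ref{searcherstrategy2}. Two basic families: one-block strategies --- fix $\ell\le\lfloor h\rfloor$ holes and dig all of them in parallel down to depth $1$ (cost $\ell\le h$); against $(y_1,y_2)$ this wins exactly on the $\binom{\ell+1}{2}$ hole-choices lying inside the block (the $\binom{\ell}{2}$ distinct pairs plus the $\ell$ repeated ones), independently of $(y_1,y_2)$. And two-block strategies --- fix two disjoint sets of $\ell$ holes each, dig the first in parallel until a nut appears, then the second in parallel until a nut appears, abandoning when the budget is spent; using $y_1+y_2\le 1$ one checks this wins on all $\ell^2$ ``split'' hole-choices when $\ell(y_1+y_2)\le h$ and on none of them otherwise. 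Thus a two-block strategy contributes a step function of $s:=y_1+y_2$ that equals $\ell^2$ for $s\le h/\ell$ and $0$ beyond, while a one-block strategy contributes the constant $\binom{\ell+1}{2}$. A convex combination of these is an equalizer of value $\tfrac{2h^2}{n(n+1)}$ only if the corresponding weighted sum of winning-counts is the constant $h^2$ on all of $\Delta$; in particular every two-block strategy used must have $\ell\le\lfloor h\rfloor$, and one checks that with these two families alone the weighted count cannot exceed $\lfloor h\rfloor^2<h^2$. Hence the ``excess'' $h^2-\lfloor h\rfloor^2$ of winning mass has to be extracted from a richer supply of structured strategies --- blocks dug only part way, overlapping blocks, and adaptive rules that keep digging the first block a little before switching --- so that the mixture also equalizes in the ``deep'' and ``unbalanced'' directions of $\Delta$, not merely along $s$. (For integer $h$ no excess is needed: two disjoint size-$h$ blocks already equalize, essentially recovering Proposition~\ref{searcherstrategy2}.)

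With such a family in hand I would carry out the bookkeeping, and I expect the hypotheses of the statement to come out as its exact feasibility conditions. The inequality $\tfrac{h^2}{\lfloor h\rfloor}\le\tfrac{n+1}{2}$ should be what guarantees there is enough room among the $n$ holes to lay out all the (disjoint and overlapping) blocks the mixture uses --- for integer $h$ it reads $2h\le n+1$, matching Theorem~\ref{discretedoublelimit} --- and ``$h\ge 3$, or $h=3-\tfrac1q$ for some $q\in\Z^+\setminus\{3\}$'' should be precisely the set of finite $h$ for which the linear system that determines the mixing weights has a solution in genuine probabilities: when $\lfloor h\rfloor\ge 3$ there is enough slack, while for $h<3$ one is forced to combine size-$2$ and size-$3$ structures in a fixed ratio that is a valid probability only for $h=3-\tfrac1q$, with $q=3$ (that is $h=\tfrac83$) the single value in that family where the required weight falls outside $[0,1]$. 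The heart of the proof --- and the step I expect to be hardest --- is exactly this last point: finding structured searcher strategies that win \emph{more} on the deep, balanced placements $y_1+y_2\approx 1$ (where every block of size $>\lfloor h\rfloor$ is useless) without ever digging more than total depth $h$, and arranging the weights so the equalizer identity closes up over the whole of $\Delta$ for precisely the listed $h$. This is presumably also why the statement claims sharpness only there: for $h=\tfrac83$, and for every $h\notin[3,\infty)$ not of the listed form, one should expect the value to be strictly below $\tfrac{2h^2}{n(n+1)}$, witnessed by a hider strategy that slightly perturbs the uniform pair-hider.
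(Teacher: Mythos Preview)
The statement you are addressing is labeled a \emph{Conjecture} in the paper, and the paper gives no proof of it; immediately after stating it the author merely proposes a specific test case ($n=6$, $h=\sqrt{10.5}$) as a place to start if one wants to attack it. So there is no ``paper's own proof'' to compare your proposal against.

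Your write-up is candid that it is only a ``possible route'', and at that level the shape is reasonable: the upper bound is already Theorem~\ref{discretedoublelimit}, so what is missing is an equalizing searcher strategy, and for integer $h$ you correctly point back to Proposition~\ref{searcherstrategy2}. But everything that would make this a proof is left as speculation. The ``richer supply of structured strategies'' that must supply the excess $h^2-\lfloor h\rfloor^2$ is never constructed; you do not exhibit a single searcher mixture achieving $\tfrac{2h^2}{n(n+1)}$ for any non-integer $h$. Your explanations of the two hypotheses are guesses dressed as arguments: that $\tfrac{h^2}{\lfloor h\rfloor}\le\tfrac{n+1}{2}$ ``should be what guarantees there is enough room'' and that $q=3$ is excluded because ``the required weight falls outside $[0,1]$'' are plausible-sounding, but you have not written down the linear system whose feasibility you claim these conditions characterize, let alone verified it. (Note, incidentally, that the paper's own table confirms the exclusion of $h=\tfrac83$: there the value is $\frac{14\frac{2}{53}}{n(n+1)}$, strictly below $\frac{2(8/3)^2}{n(n+1)}=\frac{128/9}{n(n+1)}$, so any purported equalizer at $h=\tfrac83$ would contradict an actual theorem.) Also, your block strategies are analyzed only on the ``split'' hole-choices; a genuine equalizer argument has to account for all $\binom{n+1}{2}$ choices, including both-in-one-block and same-hole cases, and you have not done that bookkeeping.

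In short: this is a sketch of where a proof might eventually live, consistent with the paper's own suggestion that the question is open, but it does not advance the conjecture beyond its current status.
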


\begin{Theorem} \label{largeh}
If $\frac{n+1}{2} \le h$, then the value of the game is $\frac{\lfloor h \rfloor}{n}$. This is always an upper bound for the value of the game, because of the hiding strategy of putting both nuts at the same random hole, at depth 1.
\end{Theorem}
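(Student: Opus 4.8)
The plan is to prove the two inequalities separately, the upper bound being easy (and already indicated in the statement) and the lower bound being the real content.

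\emph{Upper bound} ($v \le \lfloor h \rfloor/n$ for every $h$). Let the hider pick a uniformly random hole and cache both nuts in it with the deeper one at depth $1$. Against this hider the searcher learns nothing until he has dug some hole fully to depth $1$, and he finds both nuts exactly when that hole is the hider's hole; partial digging of any other hole is never rewarded. Since digging one hole to depth $1$ costs $1$, the searcher can dig at most $\lfloor h \rfloor$ holes to depth $1$, so he wins with probability at most $\lfloor h \rfloor/n$. When $h \ge n$ this bound is vacuous but the searcher then wins for sure by digging every hole to depth $1$; so from now on assume $\tfrac{n+1}{2} \le h < n$ and write $m = \lfloor h \rfloor$, so $m \ge \lceil \tfrac{n-1}{2}\rceil$.

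\emph{Lower bound} (the searcher can secure $m/n$). As in the proof of Theorem~\ref{limitthm}, both players may prepend a uniformly random permutation of the holes to their strategies, so it suffices to exhibit one searcher strategy whose expected score is at least $m/n$ against each pure ``pattern'' of the hider after a random relabelling of the holes. A pattern is of one of two types: (A) both nuts lie in one hole, the deeper at depth $a\le 1$; (B) the nuts lie in two holes at depths $d_1\le d_2$ with $d_1+d_2\le 1$, so $d_1\le\tfrac12$. Type (A) already forces a lot of structure: against the pattern ``both at depth $1$ in one hole'' any strategy wins with probability $E\big[\#\{\text{holes eventually dug to depth }1\}\big]/n$, which is $\le m/n$, with equality only if the searcher's strategy digs $m$ holes to depth $1$ with probability $1$; the only freedom left is the leftover budget $h-m<1$ and, crucially, the behaviour after the first nut is found, which is never triggered by that pattern. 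So I would have the searcher process the holes in a uniformly random order, digging the current hole towards depth $1$; as soon as the first nut appears, at some depth $d$, he enters a \emph{pursuit phase} whose task is to find the second nut knowing that it either lies deeper than $d$ in the current hole (type (A)) or at depth $\le 1-d$ in some not-yet-touched hole (type (B)). Against type (A) this reproduces the bound exactly: the searcher reaches the hider's hole iff it is among the first $m$ holes in the random order, i.e.\ with probability $m/n$, and then finishes it to depth $1$ and wins. Against type (B) one bounds the win probability by a recursion on the number of untouched holes and the remaining budget, conditioning on the position of the first special hole met and on whether the shallow or the deep nut sits there; the hypothesis $h\ge\tfrac{n+1}{2}$ is precisely what is needed so that, once a shallow nut has been found, enough of the remaining holes can still be covered (each to depth $1-d_1\le 1$) to drive the win probability back up to $m/n$.

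\emph{Main obstacle.} The naive pursuit phase ``finish the current hole to depth $1$, then scan the remaining holes to depth $1-d$'' is not good enough: for patterns in which the first nut is shallow and its partner is deep — type (B) with $(d_1,d_2)$ near $(0,1)$ versus type (A) with depths near $(0,1)$ in a single hole — these two are indistinguishable to the searcher at the instant of the find, and with the tight budget $h\approx n/2$ he cannot afford both to finish the current hole and to scan many further holes to near depth $1$. The fix is that upon finding a shallow first nut the searcher must \emph{hedge}: randomise, with probabilities (and a threshold depth below which hedging kicks in at all) depending on $n$, $h$ and the realised position, between completing the current hole and abandoning it for fresh holes. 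The real work is to choose these hedging parameters so that the worst case over \emph{all} admissible patterns equals exactly $m/n$, and to verify this for every pair $(n,h)$ with $\tfrac{n+1}{2}\le h<n$; this optimisation, together with the bookkeeping of the type-(B) recursion, is the technical heart of the argument, and combined with the upper bound it yields value $=m/n=\lfloor h\rfloor/n$. (Since the game is infinite, ``secures $m/n$'' is to be read as ``secures $m/n-\varepsilon$ for every $\varepsilon>0$'' via a near-optimal strategy.)
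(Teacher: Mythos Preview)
Your upper bound is fine and matches the paper. The lower bound, however, is only a sketch: you identify an obstacle and propose to fix it by ``hedging'', but you never actually carry out the optimisation or the recursion, and you end with ``secures $m/n-\varepsilon$''. As written this is not a proof.

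More importantly, the obstacle you struggle with is an artefact of the sequential strategy you chose. The paper's searcher does something much cleaner: he picks $\lfloor h\rfloor$ holes at random and digs them \emph{in parallel}, all at the same depth, until the first nut appears at some depth $y$. He then finishes that hole to depth $1$, takes the other $\lfloor h\rfloor-1$ chosen holes to depth $\max(y,1-y)$, and takes the remaining $n-\lfloor h\rfloor$ holes to depth $\min(y,1-y)$. A one-line budget check,
\[
1+(\lfloor h\rfloor-1)\max(y,1-y)+(n-\lfloor h\rfloor)\min(y,1-y)\le 1+(h-1)\cdot 1=h,
\]
uses exactly the hypothesis $n\le 2h-1$. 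The winning event is then simply ``the hole containing the deeper nut is among the $\lfloor h\rfloor$ chosen holes'', which has probability $\lfloor h\rfloor/n$ regardless of the hider's depths or whether the nuts share a hole. No hedging, no recursion, no $\varepsilon$.

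The reason parallel digging kills your obstacle: at the moment the first nut is found, \emph{every} chosen hole is already at depth $y$, so the searcher knows that any as-yet-unfound nut in a chosen hole is at depth $\ge y$ and any nut in an unchosen hole is at depth $\le 1-y$. With sequential digging you lose this symmetry, which is why you were forced into a case split that you could not close.
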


\begin{proof}
Hiding both nuts at the same hole in depth 1 is provides hiding probability at most $\frac{\lfloor h \rfloor}{n}$.

Consider now the following strategy of the searcher. He chooses $\lfloor h \rfloor$ holes at random, and starts digging in them parallelly, until a nut is found, at hole $x$ in depth $y$. Then he continues digging $x$ until depth $1$. Then if $y \le \frac{1}{2}$, then he digs the other $\lfloor h \rfloor - 1$ chosen holes until depth $1 - y$, and the remaining $n - \lfloor h \rfloor$ holes until depth $\min(y, 1-y)$.

If the nut with the higher depth (if the depths are the same, then either nut) is in one of the $\lfloor h \rfloor$ chosen holes, then the searcher finds both nuts. This has a probability at least $\frac{\lfloor h \rfloor}{n}$.

This strategy uses a total digging amount of
\begin{equation*}
1 + \big(\lfloor h \rfloor - 1\big) \cdot \max(y, 1-y) + \big(n - \lfloor h \rfloor\big) \cdot \min(y, 1-y)
\end{equation*}
\begin{equation*}
= 1 + (n - 1) \cdot \min(y, 1-y) + \big(\lfloor h \rfloor - 1\big) \cdot \big( \max(y, 1-y) - \min(y, 1-y) \big)
\end{equation*}
\begin{equation*}
\le 1 + (2h - 2) \cdot \min(y, 1-y) + (h - 1) \cdot \big( \max(y, 1-y) - \min(y, 1-y) \big)
\end{equation*}
\begin{equation*}
= 1 + (h - 1) \cdot \big( \max(y, 1-y) + \min(y, 1-y) \big) = 1 + (h - 1) = h. \qedhere
\end{equation*}
\end{proof}

\begin{Conjecture} \label{largehconj}
If $\frac{n+1}{2} \le \frac{h^2}{\lfloor h \rfloor}$, then the value of the game is $\frac{\lfloor h \rfloor}{n}$.
\end{Conjecture}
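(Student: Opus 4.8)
The plan is to prove the matching lower and upper bounds separately. The upper bound is exactly the remark already recorded in Theorem~\ref{largeh}: against the hider who places both nuts --- one of them at depth $1$ --- in a single uniformly random hole, the searcher can win only by digging that hole all the way down to depth $1$, and in any realization his budget $h$ lets him fully dig at most $\lfloor h\rfloor$ holes, so he wins with probability at most $\frac{\lfloor h\rfloor}{n}$; this needs no hypothesis relating $h$ and $n$. Hence the entire content of Conjecture~\ref{largehconj} is the lower bound: under $\frac{n+1}{2}\le\frac{h^2}{\lfloor h\rfloor}$ the searcher has a strategy guaranteeing at least $\frac{\lfloor h\rfloor}{n}$ against every hider.

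For the lower bound it suffices to beat every \emph{pure} hider placement, and since only two nuts are hidden each such placement is of one of two types: (i) both nuts in one hole, at depths $y<y'\le 1$; or (ii) the nuts in two distinct holes $a\ne b$, at depths $y_1,y_2$ with $y_1+y_2\le 1$. I would take the searcher's strategy to be a mixture of two ingredients. The first, optimal against type (i), is the \emph{deep} strategy: dig a uniformly random set of $\lfloor h\rfloor$ holes to full depth $1$, which wins with probability exactly $\frac{\lfloor h\rfloor}{n}$ whenever the two nuts share a hole. The second, tailored to type (ii), is an \emph{adaptive parallel} strategy in the spirit of Proposition~\ref{searchersrtategy} and of the searcher in Theorem~\ref{discretedoublelimit}/Proposition~\ref{searcherstrategy2}: dig the holes in a uniformly random order, deepening each before passing to the next, and upon finding the first nut, at depth $y$ in some hole, first probe that hole down to depth $1$ --- this catches a possible second nut in the same hole, so it costs nothing in type (i) --- and if no second nut turns up there, use $y_1+y_2\le 1$ (which forces the remaining nut to sit at depth at most $1-y$) to lower the target depth to $1-y$ and to spread the leftover budget over as many \emph{fresh} holes as possible. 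The mixture is necessary because type (i) wants the budget concentrated in a few fully-dug holes while type (ii) wants it spread shallowly over many holes; no single strategy does both, but the right convex combination can clear $\frac{\lfloor h\rfloor}{n}$ against both.

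The analysis then comes down to writing the win probability of this mixture against a type-(ii) placement as an explicit combinatorial expression in $n$, $h$, $y_1$, $y_2$ and the random positions of the two nut-holes, and choosing the mixing weight together with the deepening schedule so that the worst case --- over all $y_1+y_2\le1$ and all orderings --- is $\frac{\lfloor h\rfloor}{n}$. I expect the feasibility of this optimization to be governed \emph{precisely} by $\frac{n+1}{2}\le\frac{h^2}{\lfloor h\rfloor}$; the tell-tale consistency check is that on the boundary curve $\frac{n+1}{2}=\frac{h^2}{\lfloor h\rfloor}$ one has $\frac{\lfloor h\rfloor}{n}=\frac{2h^2}{n(n+1)}$, the quantity in Theorem~\ref{discretedoublelimit}, so that Conjecture~\ref{largehconj} asserts exactly that in this regime no hider beats the better of the ``single hole at depth $1$'' strategy and the uniform-$(y_1,y_2)$ strategy of Theorem~\ref{discretedoublelimit}.

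The hard part is the budget bookkeeping in the adaptive strategy when the first nut is found \emph{late}: by then deepening the empty holes has already consumed most of $h$, leaving too little for the second phase, so one must either cap the initial deepening depth --- which hurts type (i), and is exactly why the mixture and not the adaptive strategy alone is used --- or track carefully the joint law of the positions of $a$ and $b$. Combined with the min--max over $y_1,y_2$ on the simplex (balanced and imbalanced depths are bad for the searcher for opposite reasons) and with the floor functions in the target value, this is a finite but genuinely intricate case analysis, in line with the paper's own assessment that such characterizations tend to be ``rather accidentally true''; a computer-assisted check of the finitely many regimes, as already suggested for the Kikuta--Ruckle analysis, is a realistic fallback.
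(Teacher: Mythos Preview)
The statement you are attempting to prove is labeled in the paper as a \emph{Conjecture}, not a theorem: the paper does not supply a proof and explicitly leaves it open, even suggesting the test case $n=6$, $h=\sqrt{10.5}$ as a way to challenge or support it. So there is no ``paper's own proof'' to compare your attempt against; what the paper \emph{does} prove is the weaker Theorem~\ref{largeh}, under the stronger hypothesis $\frac{n+1}{2}\le h$.

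Your proposal is not a proof either, and you say so yourself: you outline a plausible searcher strategy (a mixture of the ``dig $\lfloor h\rfloor$ random holes to depth~$1$'' strategy and an adaptive parallel strategy), correctly observe that the upper bound is the easy half, correctly note the consistency check at the boundary $\frac{n+1}{2}=\frac{h^2}{\lfloor h\rfloor}$ where $\frac{\lfloor h\rfloor}{n}=\frac{2h^2}{n(n+1)}$, and then candidly identify the gap --- the budget bookkeeping in the adaptive phase and the min--max over $(y_1,y_2)$ --- without closing it. That is an honest status report, not a proof; the phrase ``I expect the feasibility of this optimization to be governed precisely by $\frac{n+1}{2}\le\frac{h^2}{\lfloor h\rfloor}$'' is exactly the step that would need to be established, and nothing in your write-up establishes it. If you want to push further, the paper's suggested instance $n=6$, $h=\sqrt{10.5}\approx 3.24$ (so $\lfloor h\rfloor=3$, $\frac{h^2}{\lfloor h\rfloor}=3.5=\frac{n+1}{2}$, target value $\frac{1}{2}$) sits right on the conjectured boundary and would be the natural first test of whether any mixture of your two ingredients can actually reach $\frac{\lfloor h\rfloor}{n}$.
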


To challenge Conjectures \ref{discretedoublelimitconj} and \ref{largehconj}, or to try to prove them, the author suggests considering the following question.

\begin{Question}
For $n = 6$, $h = \sqrt{10.5} \approx 3.24$, is it true that the searcher can win with probability $\frac{1}{2}$?
\end{Question}

If $h < 3$, then the following discrete version of the searcher's double limit game solution can provide a better upper bound.

\begin{Theorem}\label{discretelimit}
If $h < \frac{a}{b}$ for some $a, b \in \Z^+$, then with the following hiding strategy, the searcher always wins with probability at most $\frac{2(a-1)(a-2)}{b(b-1) \cdot n(n+1)}$.

The hider chooses $y_1, y_2 \in \big\{\frac{1}{b}, \frac{2}{b}, \frac{3}{b}, ..., \frac{b - 1}{b}\big\}$, $y_1 + y_2 \le 1$ uniformly at random from the ${b \choose 2}$ possible choices, and chooses the hiding straregy $(y_1, y_2)$.
\end{Theorem}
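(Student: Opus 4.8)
The plan is to follow the pattern of the proof of Proposition~\ref{hiderstrategy}: turn ``the searcher wins'' into a counting problem and bound it by injecting winning configurations into a small index set. Since the searcher's best reply to a fixed hiding strategy may be taken pure, it suffices to bound $\Pr(\text{searcher wins})$ over pure searcher strategies $\sigma$. Write $P=\{(i,j):i,j\ge 1,\ i+j\le b\}$, so $|P|=\binom b2$; by Conjecture~\ref{pairdef} the hider picks $(i,j)\in P$ uniformly, picks one of the $\binom{n+1}2$ with-repetition hole selections uniformly, and places the nuts according to the strategy $(i/b,j/b)$. A short bookkeeping computation — using that the strategy $(y_1,y_2)$ assigns the multiset $\{y_1,y_2\}$ symmetrically to the two chosen holes — shows that the induced distribution on unordered allocations puts mass exactly $\frac1{|P|\binom{n+1}2}$ on each distinct-hole allocation $\{(u,i/b),(v,j/b)\}$ with $u\ne v$, $(i,j)\in P$, and mass exactly $\frac{b-i}{|P|\binom{n+1}2}$ on each same-hole allocation $\{(x,i/b),(x,1)\}$ with $1\le i\le b-1$ (the factor $b-i$ counts the partners $j$ with $(i,j)\in P$ together with the branch choice in Conjecture~\ref{pairdef}). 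Writing $W_d,W_s$ for the distinct-hole and same-hole allocations on which $\sigma$ wins and $i(A)$ for the shallow-nut index of $A\in W_s$, and using $\frac1{|P|\binom{n+1}2}=\frac{4}{b(b-1)\,n(n+1)}$, the theorem becomes equivalent to the combinatorial inequality
\begin{equation*}
|W_d|+\sum_{A\in W_s}\big(b-i(A)\big)\ \le\ \binom{a-1}{2}.
\end{equation*}

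To prove this I would record the search as a sequence of \emph{cells}: $\sigma$ has \emph{reached} cell $(u,d/b)$ once it has dug hole $u$ to depth $\ge d/b$. Reaching a new cell costs at least $1/b$ of digging and the total budget is $h<a/b$, so in any play at most $\lfloor hb\rfloor\le a-1$ cells are reached; a nut is found exactly when its cell is reached. Hence if $\sigma$ wins on $A$ the two cells carrying the nuts of $A$ are among the reached cells; let $r_1(A)<r_2(A)$ be their positions in the reach order, so $1\le r_1(A)<r_2(A)\le a-1$. The key point is a reconstruction lemma: as $\sigma$ is pure, its behaviour before the first find is a fixed digging path, so $r_1(A)$ alone determines the first-found cell $c_1$; the continuation is then a fixed function of ``one nut found at $c_1$'', so $r_2(A)$ determines the second-found cell $c_2$; hence $A=\{c_1,c_2\}$ is recovered from $(r_1(A),r_2(A))$. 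In particular the map $A\mapsto(r_1(A),r_2(A))$ is injective on $W_d\cup W_s$, and on $W_s$ it is already injective through $r_1$ alone, since there $c_1=(x,i/b)$ determines $A$.

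Next I would charge pairs from $\{1,\dots,a-1\}$ to winning allocations. To $A\in W_d$ charge the pair $\{r_1(A),r_2(A)\}$. To $A=\{(x,i/b),(x,1)\}\in W_s$ charge the $b-i$ pairs $\{r_1(A),s\}$, where $s$ runs over the positions of the hole-$x$ cells reached strictly after the first find — these are exactly the $b-i$ cells $(x,(i+1)/b),\dots,(x,1)$, and each such $s$ satisfies $s\le r_2(A)\le a-1$. These charges are pairwise disjoint: within $W_d$ by the injectivity above; within $W_s$ because distinct allocations have distinct $r_1$ and a swap $\{r_1(A),s\}=\{r_1(A'),s'\}$ with $r_1(A)\ne r_1(A')$ is impossible; and a $W_d$-charge meeting a $W_s$-charge would force $r_1(A)=r_1(A')$, hence a common first cell $c_1$ and a common continuation, after which the distinct-hole winner's second find lies in a hole $\ne x$ while every extra $W_s$-charge uses a hole-$x$ cell, a contradiction. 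Therefore $|W_d|+\sum_{A\in W_s}(b-i(A))$ equals the number of charged pairs, which is at most $\binom{a-1}{2}$, and the theorem follows.

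The step I expect to be the main obstacle is making this charging argument fully rigorous while handling the searcher's adaptivity and the degeneracies the paper otherwise suppresses: the ``one new cell per $1/b$ of budget'' lemma that pins down a find from its reach index, the cross-type non-collision argument, and the boundary cases (digging to a depth that is not a multiple of $1/b$ only shrinks the reached set and is harmless; $a\le b$ forces $W_s=\varnothing$; ties in the digging order are measure-zero). One should also double-check the atom-weight computation for same-hole allocations, since the factor $b-i$ — and hence the exact constant $\tfrac{2(a-1)(a-2)}{b(b-1)\,n(n+1)}$ — comes from it.
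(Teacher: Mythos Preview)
Your proof is correct and follows the same reach-index injection as the paper's three-sentence argument: at most $a-1$ grid cells can be dug, and the ordered pair of indices at which the two nuts are found determines the allocation, giving at most $\binom{a-1}{2}$ winning allocations against a sample space of size $\binom{b}{2}\binom{n+1}{2}$. Your charging argument for the same-hole placements---which carry weight $b-i$ rather than $1$---is an explicit treatment of a point the paper's terse proof passes over; it is correct (and could be rephrased as ``move the depth-$1$ nut up to depth $(i+j)/b$, which only helps the searcher and makes every outcome equiprobable''), and it is genuinely needed in the regime $a>b$ where the theorem is actually applied.
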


\begin{proof}
The searcher can dig at most $a - 1$ possible hiding points (depths $\frac{1}{b}, \frac{2}{b}, ..., 1$). Given the strategy of searcher, if he finds the two nuts at the $i$th and $j$th searched possible hiding points, then it determines the two positions of the nuts. There are ${a - 1 \choose 2}$ different pairs of integers $1 \le i < j \le a - 1$, and there are ${b \choose 2} \cdot {n+1 \choose 2}$ possible pairs of positions, so the searcher cannot win with higher probability than $\frac{{a - 1 \choose 2}}{{b \choose 2} {n+1 \choose 2}} = \frac{2(a-1)(a-2)}{b(b-1) \cdot n(n+1)}$.
\end{proof}

\begin{Conjecture}
If $h \in \big(\frac{5}{2}, \frac{8}{3}\big) \cup \big[\frac{19}{7}, 2\big) \setminus \big\{3 - \frac{1}{q}\: q \in \Z^+\big\}$, then the best upper bound provided by Theorem~\ref{discretelimit} is sharp.
\end{Conjecture}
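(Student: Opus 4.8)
By the minimax theorem it suffices to exhibit, for every large $n$, a single mixed searcher strategy whose winning probability against every pure placement of the hider is at least
\[
V(n,h)\ :=\ \min\Big\{\ \tfrac{2(a-1)(a-2)}{b(b-1)\,n(n+1)}\ :\ a,b\in\Z^{+},\ a>hb\ \Big\},
\]
because the hider's strategy in Theorem~\ref{discretelimit}, applied with a minimizing pair $(a,b)$, already certifies that the value is at most $V(n,h)$. Applying the hole-randomization reduction from the proof of Theorem~\ref{limitthm}, we may assume the searcher plays a strategy invariant under permutations of the holes; then the only pure hider placements that matter are (i) the two nuts at depths $u,v$ with $0<u+v\le1$, in a uniformly random pair of distinct holes, and (ii) the same-hole placements, both nuts in one random hole at depths $d_1\le d_2\le1$. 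Family (ii) is dominated for the hider once $n$ is large --- against ``dig $\lfloor h\rfloor$ random holes down to depth $1$'' the searcher wins with probability $\lfloor h\rfloor/n$, which is $\gg V(n,h)\sim c/n^{2}$ --- so the whole task reduces to beating every placement of family (i) with probability $\ge V(n,h)$.

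I would build the searcher's mixed strategy from three kinds of pure strategy. First, \emph{grid} strategies: choose a profile of depth-levels $(\ell_1,\ell_2,\dots)$ with $\sum_i|S_i|\ell_i=a-1$, pick disjoint uniformly random hole-sets $S_1,S_2,\dots$, and dig each $S_i$ in parallel down to depth $\ell_i/b$; such a strategy beats precisely the placements whose two nuts sit at depths in $\tfrac1b\Z$ in holes of two different $S_i$, and over all admissible profiles these cover exactly the $\binom{a-1}{2}$ configurations counted in the upper bound of Theorem~\ref{discretelimit}. Second, \emph{half-shifted grid} strategies: the same with $\tfrac1b\Z$ replaced by a half-shifted grid, to stop a hider from wasting a whole searcher-level by planting a nut just above a grid point. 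Third, \emph{continuous parallel sweeps} in the style of Proposition~\ref{searchersrtategy}: dig a random hole-set of size $\lfloor h\rfloor$ in parallel to steadily increasing depth until the first nut appears at depth $w$; then, using the remaining budget $h-\lfloor h\rfloor w$ and the knowledge of $w$, switch to a fresh random hole-set of the largest size that lets depth $1-w$ be reached, and dig it in parallel until the second nut appears --- this catches two nuts at \emph{arbitrary} depths and covers hider strategies far from the grid. A small weight on the strategy of family (ii) deals with same-hole placements. The mixture weights are the free parameters.

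The verification splits into feasibility and optimality. Feasibility is finite once $(a,b)$ is fixed: each grid/shifted-grid profile must respect $\sum_i|S_i|\ell_i\le a-1$ (i.e.\ total digging depth $<a/b$, which leaves room because $a/b>h$), and each sweep must satisfy $\lfloor h\rfloor w+|S_2(w)|(1-w)\le h$ for every first-depth $w$. Optimality: for every placement $(u,v)$ of family (i) the winning probability of the mixture is a piecewise-constant function of $(u,v)$, and one must show its minimum over $\{u+v\le1\}$ equals $V(n,h)$. The worst placements should lie just above the relevant multiples of $1/b$ near $\tfrac14$ --- the same critical value $y\approx\tfrac14$ that drives the proof of Theorem~\ref{limitcounter} --- and there the grid, shifted-grid and sweep contributions must add up to $V(n,h)$ exactly; since the minimal winning configurations biject with the $\binom{a-1}{2}$ pairs of the upper-bound count, equality follows.

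The main obstacle is the optimality half together with the number theory that selects $(a,b)$ and the profile weights. A short computation shows that $\min\{(a-1)(a-2)/(b(b-1)):a>hb\}$ can be made strictly smaller than $h^{2}$ exactly when $h<3-\tfrac2q$ (with $h=p/q$ in lowest terms), and otherwise the bound $\tfrac{2h^{2}}{n(n+1)}$ of Theorem~\ref{discretedoublelimit} is at least as good --- which is why the excluded points $h=3-\tfrac1q$ must be removed, and why the conjecture is safe to state only on the two ranges singled out, where the optimal $(a,b)$ and the mixture can be written down cleanly. Choosing the mixture weights so that the winning probability is simultaneously feasible and \emph{flat at $V(n,h)$} over all of family (i) --- absorbing both the discretisation loss and an adversary who nudges nuts off the grid --- is a delicate balancing problem, and I expect it to require a separate elementary argument for each small value of the denominator $b$, in the spirit of the case analysis the author describes for the Kikuta--Ruckle conjecture.
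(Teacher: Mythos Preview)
The statement you are attempting to prove is labelled a \emph{Conjecture} in the paper, and the paper offers no proof of it; there is therefore nothing on the paper's side to compare your attempt against. (Note also the typo in the statement as printed: $\big[\tfrac{19}{7},2\big)$ is empty and is meant to be $\big[\tfrac{19}{7},3\big)$.)

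What you have written is a research plan rather than a proof, and you say so yourself in the last paragraph: the decisive step --- exhibiting an explicit searcher mixture whose winning probability is \emph{exactly} $V(n,h)$ against every hider placement --- is deferred with the remark that you ``expect it to require a separate elementary argument for each small value of the denominator $b$.'' That deferred step \emph{is} the content of the conjecture, so nothing has actually been reduced.

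Two places where the plan already breaks. First, your treatment of same-hole placements does not give the exact constant. You argue that family~(ii) is harmless because digging $\lfloor h\rfloor$ random holes to depth~$1$ wins with probability $\Theta(1/n)\gg V(n,h)=\Theta(1/n^{2})$, and then put ``a small weight'' on that strategy. But any positive weight $\epsilon$ on the deep-dig component costs a factor $1-\epsilon$ against family~(i), so the mixture falls strictly below $V(n,h)$ there. To hit the value exactly, the grid/sweep components themselves must already win against same-hole placements with the right coefficient; this is precisely what Proposition~\ref{searcherstrategy2} arranges through the $\tfrac{2h}{n+1}$ re-selection probability, and why the paper formulates Conjecture~\ref{pairdef} so that the two families are handled by one calculation. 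Second, the number-theoretic claim is garbled. It is not true that $\min\{(a-1)(a-2)/(b(b-1)):a>hb\}<h^{2}$ ``exactly when $h<3-\tfrac{2}{q}$'': for $h$ just below $\tfrac{8}{3}$ the pair $(a,b)=(8,3)$ gives ratio $7<(\tfrac83)^{2}$, whereas for $h=\tfrac{13}{5}$ no pair beats $h^{2}$ and the infimum is $h^{2}$, unattained. The points $h=3-\tfrac{1}{q}$ are excluded not because the Theorem~\ref{discretelimit} bound improves on $h^{2}$ there, but because at those points the infimum equals $h^{2}$ and is not attained, and Conjecture~\ref{discretedoublelimitconj} separately asserts the value $\tfrac{2h^{2}}{n(n+1)}$.
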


Now we have a conjecture of the solution for $h \in \big[\frac{5}{2}, n\big] \setminus \big[\frac{8}{3}, \frac{19}{7}\big)$.

For $h \in \big[0, \frac{9}{5}\big) \cup \big[2, \frac{7}{3}\big)$, the values of the games are the very same as for $n = 4$, written in the form $\frac{\alpha(h)}{n(n+1)}$. The proofs are also essentially the same.

\begin{Theorem} \label{solution2minus}
For $h \in \big[2 - \frac{1}{q-1},\ 2 - \frac{1}{q}\big)$, $q \in \{5, 6, 7, 8, 9\}$ and $n \le q - 1$, and for $h \in \big[\frac{9}{5}, 2\big)$ and $n \le 5$, the values of the games are $\frac{9}{2}, 5, \frac{26}{5}, \frac{28}{5}, \frac{17}{3}$, $6$, respectively, divided by $n(n+1)$.

An optimal hiding strategy in the first 5 cases are $\big(\frac{1}{q}, \frac{q-1}{q}\big)$ with probability $\frac{1}{2}, \frac{1}{2}, \frac{2}{5}, \frac{2}{5}, \frac{1}{3}$, respectively, and $\big(\frac{q-1}{2q}, \frac{q+1}{2q}\big)$ otherwise.  In the last case, it is $\big(\frac{1}{4}, \frac{3}{4}\big)$ with probability $\frac{2}{3}$ and $\big(\frac{1}{2}, \frac{1}{2}\big)$ with probability $\frac{1}{3}$, or in other words, it is a uniform random extremal strategy with depths multiples of $\frac{1}{4}$. An optimal strategy of the searcher is the mixture of the followings, until finding the first nut (the continuation is obvious, see Lemma~\ref{round2opt}). He caches a random hole until depth $h - 1$, then another one until depth $\frac{3 - h}{2}$, then continues the first hole until depth 1. Or he just caches a random hole until depth $1$.
The former strategy is used with probabilities $\frac{1}{4}, \frac{2}{4}, \frac{3}{5}, \frac{4}{5}, \frac{5}{6}, 1$, respectively.
\end{Theorem}

The proof is a simple but long case analysis which we omit from this paper.

For $h \in \big[\frac{7}{3}, \frac{5}{2}\big)$, we expect a similar but more difficult structure of the solutions as for $h \in \big[\frac{9}{5}, 2\big)$. What we know is the following.

\begin{Lemma} \label{val5/2}
If $h < \frac{5}{2}$, then the value of the game is at most $\frac{11}{n(n+1)}$. This can be achieved by the strategy $\big(\frac{1}{4}, \frac{3}{4}\big)$ with probability $\frac{1}{2}$ and $\big(\frac{1}{2}, \frac{1}{2}\big)$ with probability $\frac{1}{2}$.
\end{Lemma}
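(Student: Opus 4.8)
It suffices to show that against the displayed hiding strategy $\mu_0$ --- which plays the ``pair'' strategies (in the sense of Conjecture~\ref{pairdef}) $\big(\tfrac14,\tfrac34\big)$ and $\big(\tfrac12,\tfrac12\big)$ each with probability $\tfrac12$ --- every searcher wins with probability at most $\tfrac{11}{n(n+1)}$, since the value of the game does not exceed the worst case of any fixed hiding strategy. By convexity the searcher's strategy may be taken pure, and by the hole-symmetry of $\mu_0$ (as used in the proof of Theorem~\ref{limitthm}) we may assume it opens holes in a uniformly random order, so that, given the order, its behaviour before the first nut is found is a fixed exploration; it then suffices to bound the score for each order.

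I would first turn this into a finite count. Every depth used by $\mu_0$ lies on $G=\{\tfrac14,\tfrac12,\tfrac34,1\}$, so we may assume the searcher only ever stops a dig at a level in $\{0\}\cup G$. A hole dug to depth $\ell\in G$ lies above exactly $4\ell$ grid positions $(x,\ell')$, $\ell'\in G$, $\ell'\le\ell$, and its digging costs exactly $\ell$; hence at every instant the number of grid positions examined equals $4$ times the total amount dug, which is $<4h<10$ and therefore $\le 9$. In particular at most $4$ holes are ever dug to depth $\ge\tfrac12$ and at most $2$ to depth $1$. Each placement in $\operatorname{supp}\mu_0$ is an atom, of mass $\tfrac{1}{4\binom{n+1}{2}}$ if it is of $\big(\tfrac14,\tfrac34\big)$-type and $\tfrac{1}{2\binom{n+1}{2}}$ if it is of $\big(\tfrac12,\tfrac12\big)$-type (both for the two-hole and for the ``same hole, with depth $1$'' versions), so the searcher's win probability equals $\tfrac{1}{4\binom{n+1}{2}}$ times the \emph{weighted count} of captured atoms, where a captured $\big(\tfrac12,\tfrac12\big)$-atom counts $2$ and a captured $\big(\tfrac14,\tfrac34\big)$-atom counts $1$. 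The claim is exactly that this weighted count is at most $22$.

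The plan for the count is a case analysis on the depth $\ell_1\in\{\tfrac14,\tfrac12,\tfrac34\}$ and the step at which the first nut is found. For a winning run the ordered pair (position of the first nut found, position of the second) determines the placement, so captured atoms inject into pairs of (first-find step, second-find step). Once the first nut is found at depth $\ell_1$, the partner's depth is forced to $1-\ell_1$ (two-hole case) or to $1$ (same-hole case), and the residual budget --- together with the fact that reaching a level-$\ell$ partner costs $\ell$ in a fresh hole but only $\ell$ minus its current depth in a hole already opened during the exploration --- caps how many partner holes can still be probed. The crucial features are: an expensive first find (the $\tfrac34$-nut, found at the $j$th opened hole) leaves much budget for cheap depth-$\tfrac14$ partner probes, while a cheap first find leaves little; and a searcher configured to capture many $\big(\tfrac12,\tfrac12\big)$-atoms (by opening $4$ holes to depth $\tfrac12$) thereby has almost nothing left for the $\big(\tfrac14,\tfrac34\big)$-pattern, and conversely. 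Summing the caps over the finitely many branch types, weighted $2{:}1$, should give the bound $22$; it is attained against $\mu_0$, for $h$ near $\tfrac52$, by the searcher that opens holes to depth $\tfrac34$ and, on finding a nut at depth $\ell_1$, spends all of its remaining budget on depth-$(1-\ell_1)$ partner probes (this captures $12$ two-hole $\big(\tfrac14,\tfrac34\big)$-atoms and $5$ two-hole $\big(\tfrac12,\tfrac12\big)$-atoms, for a total of $12+2\cdot 5=22$), which is a convenient target to steer the analysis by.

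The main obstacle is precisely the adaptivity in the previous paragraph: one must show that even with the best possible reallocation of the leftover budget --- including returning to partially opened holes --- the \emph{joint} capture count over the two hiding patterns cannot exceed $22$. This is a genuine, finite, but lengthy optimisation, of the same flavour as the proof of Theorem~\ref{discretelimit} and the (omitted) case analysis of Theorem~\ref{solution2minus}, and it is strictly stronger than the non-adaptive version, which already yields the weaker bound $\tfrac{15}{2n(n+1)}$ by maximising $r(p-1)+q(q-1)+4s$ over integers $p\ge q\ge r\ge s\ge 0$ with $p+q+r+s\le 9$ (here $p,q,r,s$ count the holes ever opened to depth $\ge\tfrac14,\tfrac12,\tfrac34,1$). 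A secondary, routine point is the treatment of the ``same hole'' atoms, which enter only through the at most two holes ever dug to depth $1$ and which one expects never to improve the searcher's count here.
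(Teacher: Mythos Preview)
The paper gives no proof of this lemma; like the neighbouring Theorems~\ref{solution2minus} and~\ref{19/7} it is tacitly left as a ``simple but long case analysis''. Your reduction is exactly the natural one and matches what the paper would do: restrict the searcher to the depth grid $\{\tfrac14,\tfrac12,\tfrac34,1\}$, observe that strictly fewer than $10$ (hence at most $9$) grid positions are ever reached, compute the atom masses, and reduce the claim to the combinatorial inequality ``weighted capture count $\le 22$''. Your tight example (dig three holes in turn to depth $\tfrac34$, then on a first find at depth $\ell_1$ spend the residue on fresh depth-$(1-\ell_1)$ probes) is correct and indeed scores exactly $12 + 2\cdot 5 = 22$.

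That said, what you have written is a plan, not a proof. The entire content of the lemma is the adaptive case analysis you explicitly decline to perform (``the main obstacle\ldots a genuine, finite, but lengthy optimisation''); the setup is routine. To complete it you must actually enumerate the possible exploration profiles before the first find (there are very few, since at most $9$ grid positions are used and the pre-find digging can be taken monotone in the hole order), and for each profile and each first-find depth bound the number of reachable partners, then sum. Your injection ``(first-find step, second-find step) $\mapsto$ placement'' is the right bookkeeping device; you just have to run it.

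Two minor remarks. First, your non-adaptive aside is correct arithmetically --- $r(p-1)+q(q-1)+4s$ is maximised at $p=q=4$, $r=1$, $s=0$ with value $15$ --- but the phrasing is misleading: the $\tfrac{15}{2n(n+1)}$ it yields is not a ``weaker bound'' on the game value, it is simply not a bound on the game value at all (it bounds only non-adaptive searchers), which is why the adaptive analysis is needed. Second, the same-hole atoms are not quite ``routine'': for a searcher who digs a hole to depth $1$ they contribute $4$ to the weighted count per such hole, and with up to two such holes this is $8$ --- enough that you do have to check that diverting budget to them never beats the two-hole captures forgone.
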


\begin{Conjecture} \label{conj5/2}
The bound in Lemma~\ref{val5/2} is optimal for $h \in \big[\frac{17}{7}, \frac{5}{2}\big)$.
\end{Conjecture}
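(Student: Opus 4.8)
Since Lemma~\ref{val5/2} already furnishes the matching upper bound $v(2,2,n,h)\le\frac{11}{n(n+1)}$ through an explicit hiding strategy supported on $\big(\frac14,\frac34\big)$ and $\big(\frac12,\frac12\big)$, and since the game has a value, it suffices to exhibit, for every $h\in\big[\frac{17}{7},\frac52\big)$, a searcher strategy that guarantees winning probability at least $\frac{11}{n(n+1)}$ against every hiding strategy; the minimax theorem then forces equality. By the randomization-over-holes reduction in the sketch of Theorem~\ref{limitthm}, I may assume both players apply a uniformly random relabelling of the holes, so the searcher's winning probability against a pure placement depends only on the two depths $y_1,y_2$ and on whether the nuts share a hole. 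A single-hole placement (depths $y_1\le y_2\le1$) is found once that one hole is dug to depth $y_2$, so against a random choice of that hole the searcher wins with probability $\Theta(1/n)$, giving $n(n+1)\cdot(\text{win prob})\to\infty$; such placements are therefore far from binding. Hence I only need to control $\alpha(y_1,y_2):=n(n+1)\cdot(\text{win prob})$ for two-hole placements, constrained by $y_1+y_2\le1$, and show $\alpha\ge11$ on that triangle with equality at the two support points.

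I would guess the searcher's strategy by complementary slackness with the known optimal hider: an optimal searcher must be tight on the hider's support, so I look for a mixture of finitely many adaptive digging schedules whose value equals $11$ exactly at $\big(\frac14,\frac34\big)$ and $\big(\frac12,\frac12\big)$. Each schedule is a monotone digging plan which, by Lemma~\ref{round2opt}, completes a hole to depth $1$ upon its first find, and before the first find probes a sequence of fresh holes to prescribed depths. Modelled on the two-schedule optimal searcher of Theorem~\ref{solution2minus}, I take a mixture of a \emph{shallow-first} schedule, which probes many holes to depth $\approx\frac14$ in order to locate the shallow nut and only then deepens selected holes toward $\frac34$, and a \emph{medium} schedule, which probes holes to depth $\approx\frac12$ to handle the $\big(\frac12,\frac12\big)$ placement; the two mixing weights are pinned down by the two equations $\alpha=11$ at the support points.

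The threshold $\frac{17}{7}$ should emerge from the budget constraint. Writing the total digging depth consumed by each schedule as a function of $h$ and imposing that it not exceed $h$, the construction becomes feasible exactly at $h=\frac{17}{7}$: below this budget the shallow-first schedule can no longer reach enough holes to depth $\frac34$ while still probing enough holes to depth $\frac14$, so its guaranteed count drops below $11$. Confirming that the minimal feasible budget is precisely $\frac{17}{7}$, and that at this budget the mixture still attains $\alpha=11$, is the first computational core of the argument.

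The main obstacle is to verify $\alpha(y_1,y_2)\ge11$ \emph{uniformly} over the whole triangle, not merely on the support. The winning probability is piecewise polynomial in $(y_1,y_2)$, the breakpoints occurring where a nut depth crosses one of the schedule's dig levels (near $\frac14,\frac12,\frac34$); on each piece $\alpha$ is affine or quadratic, so its minimum over that piece is attained at a vertex or along a boundary edge. The plan is to enumerate the finitely many pieces and check $\alpha\ge11$ at every critical point, the binding ones clustering around $\big(\frac14,\frac34\big)$ and $\big(\frac12,\frac12\big)$, and in particular to rule out a dip just below a dig level, where a nut narrowly escapes a probe. This finite but delicate case analysis, carried out uniformly in $n$ (with the routine proviso that the few smallest $n$ may be checked directly), is where the real work lies; once it is in place, minimax against Lemma~\ref{val5/2} closes the argument.
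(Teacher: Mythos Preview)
The statement you are attempting to prove is labelled a \emph{Conjecture} in the paper, and the paper supplies no proof of it. In the summary table it is listed as ``upper bound proved'' only (via Lemma~\ref{val5/2}), with the matching searcher lower bound left open. So there is no paper proof to compare against; what you have written is an attempt to resolve an open problem, not to reproduce a known argument.

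As a proof, your proposal has genuine gaps. You never actually specify a searcher strategy: you describe a ``shallow-first'' schedule and a ``medium'' schedule in qualitative terms (probe to depth ``$\approx\tfrac14$'', ``$\approx\tfrac12$''), but the exact depths, the number of holes probed at each level, the order of deepening, and the mixing weights are all left undetermined. Without these, none of the subsequent claims can be checked. In particular, your assertion that the budget constraint becomes feasible exactly at $h=\tfrac{17}{7}$ is not derived from anything concrete; you have not written down the digging-time expenditure of either schedule, so there is no equation whose solution could yield $\tfrac{17}{7}$. The complementary-slackness heuristic you invoke gives two equations (tightness at $(\tfrac14,\tfrac34)$ and $(\tfrac12,\tfrac12)$) but your strategy space has many more free parameters, so this does not pin anything down.

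Your treatment of same-hole placements is also too quick. You claim these give winning probability $\Theta(1/n)$, but that holds only if your searcher strategy, \emph{before} any find, digs at least one hole all the way to depth~$1$; the ``complete to depth~$1$ after the first find'' rule from Lemma~\ref{round2opt} does not help against a placement with both nuts at depth~$1$ in the same hole. Whether your (unspecified) schedules have this property is unclear, and forcing it costs digging budget that must be accounted for in the $\tfrac{17}{7}$ threshold calculation. Finally, the piecewise verification you outline is where, as you yourself say, ``the real work lies''; outlining that such a case analysis exists is not the same as carrying it out, and experience with this game (cf.\ Theorem~\ref{19/7} and Note~\ref{note}) suggests the optimal searcher mixture in this range may require more than two pure schedules, in which case your two-schedule ansatz would fail outright.
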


\begin{Note} \label{note}
If one wants to solve $h \in \big[ \frac{7}{3}, \frac{17}{7} \big)$, then we suggest considering mixtures of extremal hiding strategies $\big(\frac{10 - 4h}{3}, \frac{4h - 7}{3}\big)$, $\big(\frac{h - 1}{5}, \frac{6 - h}{5}\big)$, $\big(\frac{16 - 6h}{5}, \frac{6h - 11}{5}\big)$, $\big(\frac{h - 1}{3}, \frac{4 - h}{3}\big)$.
\end{Note}

\begin{Theorem} \label{19/7}
If $h < h^* = \frac{67}{25}$ or $\frac{51}{19}$ or $\frac{19}{7}$, and $n \le 11$, then the searcher can win with probability at most $\frac{14\frac{2}{53}}{n(n+1)}$, $\frac{14\frac{2}{27}}{n(n+1)}$, $\frac{14\frac{2}{11}}{n(n+1)}$, respectively, if the hider uses the following mixture of hiding strategies.
\begin{itemize}
  \item $\big(\frac{3 - h^*}{2}, \frac{h^* - 1}{2}\big)$ with probability $\frac{12}{53}$, $\frac{20}{81}$, $\frac{4}{33}$, respectively;
  \item $\big(\frac{h^* - 1}{6}, \frac{7 - h^*}{6}\big)$ with probability $\frac{4}{53}$, $\frac{4}{81}$, $\frac{4}{33}$, respectively;
  \item $\big(\frac{h^* - 1}{4}, \frac{5 - h^*}{4}\big)$ with probability $\frac{36}{53}$, $\frac{56}{81}$, $\frac{8}{11}$, respectively;
  \item $\big(\frac{h^* - 1}{6}, \frac{h^* - 1}{6}\big)$ with probability $\frac{1}{53}$, $\frac{1}{81}$, $\frac{1}{33}$, respectively.
\end{itemize}

In particular, in the third case, the four depths are $\big(\frac{1}{7}, \frac{6}{7}\big)$, $\big(\frac{2}{7}, \frac{5}{7}\big)$, $\big(\frac{3}{7}, \frac{4}{7}\big)$ and $\big(\frac{2}{7}, \frac{2}{7}\big)$.
\end{Theorem}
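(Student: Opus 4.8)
The plan is to verify \emph{directly} that the displayed mixture $H^*$ of hider strategies caps the searcher, so that $H^*$ is itself the certificate; by the minimax theorem it suffices to show $\Pr[\sigma\text{ wins against }H^*]\le\frac{14\frac{2}{11}}{n(n+1)}$ (resp.\ $\frac{14\frac{2}{27}}{n(n+1)}$, $\frac{14\frac{2}{53}}{n(n+1)}$) for an arbitrary deterministic searcher strategy $\sigma$. We use only the \emph{construction} of the pair-strategies of Conjecture~\ref{pairdef} --- which also prescribes the ``same hole'' placements, one nut at some $y_i$ and one at depth $1$ --- not the conjecture itself. The three values $h^{*}\in\{\frac{67}{25},\frac{51}{19},\frac{19}{7}\}$ are treated by one argument; $h^{*}=\frac{19}{7}$ is the transparent case, because then every depth occurring in $H^*$ is a multiple of $\tfrac17$, namely $\tfrac17,\dots,\tfrac67$ together with $1$, with the four pairs $(\tfrac17,\tfrac67),(\tfrac27,\tfrac57),(\tfrac37,\tfrac47),(\tfrac27,\tfrac27)$. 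I describe that case and treat the other two as the identical computation with denominator $25$ (resp.\ $19$) and a slightly smaller budget.

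First, discretize the searcher. Since every nut lies at a depth in the finite set $D=\{\tfrac17,\dots,\tfrac67,1\}$, the only data of $\sigma$ that matters is a finite adaptive tree: before the first find, a sequence of probes ``dig a chosen hole down to its next level of $D$''; after the first find --- by the obvious second-round optimality (cf.\ Lemma~\ref{round2opt}) and the hole-permutation symmetrization used in the sketches of Theorems~\ref{limitthm} and~\ref{2limitthm} --- the analogous sequence with the remaining budget on the still-unopened holes. So we may put $\sigma$ in a canonical form that opens holes $1,2,3,\dots$ in order with a monotone depth profile. Total digging is $\le h^{*}<\frac{19}{7}$, i.e.\ a budget $<19$ in units of $\tfrac17$; this, together with $n\le 11$, makes the family of feasible probe trees finite, and in particular bounds how many holes $\sigma$ may open.

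Now count, in the style of the proof of Theorem~\ref{discretelimit}. For a fixed feasible $\sigma$ and each of the four pairs $(y_1,y_2)$ of $H^*$, $\sigma$ wins exactly on the placements whose two nuts sit at probe cells of $\sigma$ that are both reached within budget, in an order compatible with $\sigma$'s continuation after it first hits one of them; here one must handle (i) that a first find at depth $\tfrac27$ does not reveal the pair (it could be $(\tfrac27,\tfrac57)$ or $(\tfrac27,\tfrac27)$), so the optimal continuation hedges between partner depths $\tfrac57$ and $\tfrac27$, and (ii) that each ``same hole'' placement $(y_i,1)$ forces $\sigma$ to spend a whole unit of depth. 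Since $H^*$ uses uniformly random holes, $\Pr[\sigma\text{ wins against }H^*]$ depends only on the combinatorial type of $\sigma$, and $\Pr[\sigma\text{ wins against }H^*]\cdot n(n+1)$ equals a number $c(\sigma)$ not depending on $n$: a weighted count of covered hole-configurations (the factor $n(n+1)=2\binom{n+1}{2}$ coming from the hole-choices), the four pairs entering with weights $\tfrac{4}{33},\tfrac{4}{33},\tfrac{8}{11},\tfrac{1}{33}$. It then remains to show $\max_{\sigma}c(\sigma)=14\tfrac{2}{11}$, the maximum over feasible $\sigma$; the weights of $H^*$ are precisely the hider side of the value of this finite matrix game, which is why the maximum is attained at that value, and I would record the extremal searcher trees.

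The main obstacle is exactly this finite optimization: the interplay of the budget bound $h^{*}<\frac{19}{7}$, the integrality of the depths, the ambiguity between $(\tfrac27,\tfrac57)$ and $(\tfrac27,\tfrac27)$ after a depth-$\tfrac27$ find, and the costly depth-$1$ ``same hole'' placements make the bookkeeping heavy --- best delegated to a computer, as suggested elsewhere in the paper. The hypothesis $n\le 11$ is used here and is genuine: $\max_\sigma c(\sigma)$ is non-decreasing in $n$ (more holes give more feasible $\sigma$), it equals $14\tfrac{2}{11}$ for $n\le 11$, and for $n\ge 12$ a searcher who also opens a twelfth hole can exceed it, so that $\frac{14\frac{2}{11}}{n(n+1)}$ is then no longer an upper bound against $H^{*}$. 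Finally one reruns the enumeration for $h^{*}=\frac{51}{19}$ and $h^{*}=\frac{67}{25}$, where a few extra intermediate depths become available but the budget is tighter, obtaining $14\tfrac{2}{27}$ and $14\tfrac{2}{53}$ with correspondingly rescaled weights.
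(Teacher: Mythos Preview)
Your plan---discretize the searcher to the finite depth grid, reduce to a finite adaptive tree, and then enumerate and bound $c(\sigma)=\Pr[\sigma\text{ wins}]\cdot n(n+1)$ over all feasible $\sigma$---is exactly the ``simple but long case analysis'' the paper announces and then omits. So at the level of method there is nothing to compare: you and the paper are doing the same thing, and neither actually writes out the enumeration.

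One point in your write-up is off, however. You take the hypothesis ``$n\le 11$'' at face value and argue that it is \emph{genuine}: that $\max_\sigma c(\sigma)$ is non-decreasing in $n$, equals $14\tfrac{2}{11}$ for $n\le 11$, and is strictly larger once a twelfth hole becomes available. That cannot be right, and it clashes with the paper's own summary table, which records the validity range as $n\ge 11$ (so the ``$\le$'' in the statement is almost certainly a typo, repeated in Theorem~\ref{solution2minus}). The finiteness of the searcher's tree comes from the budget $h<h^*$ alone, not from a cap on $n$: with $h<\tfrac{19}{7}$ the searcher can open only boundedly many holes at depth $\tfrac17$ and still have anything left for the second round, so $\max_\sigma c(\sigma)$ stabilizes once $n$ exceeds the number of holes the extremal trees use. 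The role of ``$n\ge 11$'' is the opposite of what you wrote: it is the threshold above which the searcher has enough holes to realize the extremal trees, so that the bound becomes \emph{sharp}; for smaller $n$ the bound is still valid but slack. You should drop the paragraph asserting that a twelfth hole lets the searcher beat $14\tfrac{2}{11}$---you give no such strategy, and producing one would contradict the table.
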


The proof again is a simple but long case analysis, which we omit from this paper.

\begin{Conjecture} \label{19/7conj}
If $h \in \big[\frac{8}{3}, \frac{19}{7}\big)$, then the best bound in Theorem~\ref{19/7} is sharp.
\end{Conjecture}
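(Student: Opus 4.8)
The upper-bound half is already supplied by Theorem~\ref{19/7}: for each of the three values $h^{*}\in\{67/25,\,51/19,\,19/7\}$ the displayed mixture of pair strategies witnesses $v(2,2,n,h)\le V$ for every $h<h^{*}$, where $V$ is the corresponding value $\frac{14\frac{2}{53}}{n(n+1)}$, $\frac{14\frac{2}{27}}{n(n+1)}$, $\frac{14\frac{2}{11}}{n(n+1)}$; on the three subintervals $[8/3,67/25)$, $[67/25,51/19)$, $[51/19,19/7)$ the smallest applicable $V$ is exactly ``the best bound.'' What remains is a matching \emph{searcher} guarantee: a strategy winning with probability at least $V$ against every hiding strategy. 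Note that once such a strategy is produced for $h$ at the left endpoint of a subinterval it serves the whole subinterval, since the value is nondecreasing in the budget $h$ and stays $\le V$ across the subinterval by Theorem~\ref{19/7}; so the plan is to exhibit one searcher strategy per subinterval. I would look for it, as in Theorem~\ref{solution2minus}, as a mixture of a bounded number of pure strategies, each of the following shape: in round~$1$ the searcher digs a sequence of fresh holes to prescribed ``milestone'' depths (which may be taken on the finite grid generated by the depths used by the hiding strategy of Theorem~\ref{19/7}, since reaching a depth strictly between two grid values finds nothing new), and upon finding the first nut at depth~$y$ he switches to the Lemma~\ref{round2opt} continuation, digging every remaining candidate hole to depth $1-y$. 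The mixing weights are then forced by the LP complementary-slackness conditions against the four hiding strategies of Theorem~\ref{19/7}: those four should be tight at the optimum, and the searcher's support should consist of the pure strategies that are tight in return.

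The crux is to verify \emph{feasibility}: that the resulting mixed searcher strategy $\tau$ wins with probability $\ge V$ against \emph{all} hiding strategies, not merely the four in the support of the hider's optimum. By Conjecture~\ref{pairdef} (which the paper expects to be provable without much difficulty) it suffices to bound $\tau$ against every pair strategy $(y_1,y_2)$ on the simplex $\{y_1,y_2\ge 0,\ y_1+y_2\le 1\}$, together with the degenerate ``both nuts in one random hole at depth~$1$'' strategy; the latter --- together with the comparison against the trivial upper bound $\lfloor h\rfloor/n$ of Theorem~\ref{largeh} --- is where the small-$n$ restriction of Theorem~\ref{19/7} enters. Fix $\tau$. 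For each pure component of $\tau$ the win probability against $(y_1,y_2)$ is a low-degree piecewise polynomial in $(y_1,y_2)$, the pieces being cut out by the hyperplanes $y_i=(\text{round-$1$ milestone})$ and $y_1+y_2=1-(\text{round-$1$ milestone})$, because which round-$1$ step first reaches a nut, and how deep round~$2$ must reach, change precisely across these hyperplanes. Hence the total win probability $W(y_1,y_2)$ is piecewise polynomial on the common refinement, so on each cell its minimum is attained at a vertex; the vertices form a finite set, with both coordinates in the grid above, plus a few points on $y_1+y_2=1$. I would enumerate this set and check $W\ge V$ at each vertex, with equality exactly at the four hiding depths of Theorem~\ref{19/7} and their relabellings --- that equality is the complementary-slackness certificate that $\tau$ and the hiding strategy of Theorem~\ref{19/7} are simultaneously optimal. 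Combining $v\ge V$ with $v\le V$ then gives $v=V$, i.e.\ sharpness, on the stated range of $(n,h)$.

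I expect the piecewise minimisation of the previous paragraph to be the main obstacle. Since $\tau$ is a mixture of several round-$1$ profiles, each contributing its own family of breakpoint hyperplanes, the common refinement has many cells, and on each one must first correctly determine which nut is found first and what the round-$2$ cost is before the polynomial can even be written down; carrying out this bookkeeping and checking the resulting numerous vertex inequalities is exactly the ``simple but long case analysis'' the paper repeatedly defers, now done in full and, because of the three subintervals with their different grids, done three times. A second, separable obstacle is that Conjecture~\ref{pairdef} is only conjectured: to make the argument unconditional one would prove it first, for instance by symmetrising an arbitrary optimal hiding distribution over the labels of the holes and then invoking convexity in the depth coordinates to push the mass onto pair placements. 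Finally, the exact, non-asymptotic combinatorics of the $\binom{n+1}{2}$ normalisation and of the same-hole terms must be handled for each admissible $n$; this is routine but is the reason the clean closed form $\frac{\alpha(h)}{n(n+1)}$ and its sharpness live only on the bounded range of $n$ appearing in Theorem~\ref{19/7}.
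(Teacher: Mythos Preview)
The statement you are attempting is a \emph{conjecture} in the paper, not a theorem: the paper gives no proof of it. The table at the end of the section records its status as ``upper bound proved'' only, via Theorem~\ref{19/7}; the matching lower bound (a searcher strategy achieving the stated value) is left open. So there is no paper proof to compare your proposal against.

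As a plan of attack your outline is sensible --- find a finite mixture of round-1 digging profiles whose LP dual matches the four-point hider mixture of Theorem~\ref{19/7}, then verify optimality by a piecewise-polynomial vertex check --- but it is not a proof: you never actually construct the searcher strategies, never specify the mixing weights, and never carry out the case analysis you yourself flag as ``the main obstacle.'' Until those strategies are written down and the vertex inequalities checked, nothing has been shown beyond what Theorem~\ref{19/7} already gives. One concrete simplification: you do not need Conjecture~\ref{pairdef} for the reduction to pair placements. Your proposed $\tau$ is invariant under permutations of the holes (every hole it digs is chosen uniformly at random), so for any fixed placement of the two nuts the winning probability of $\tau$ depends only on the depths and on whether the holes coincide; hence checking $\tau$ against all $(y_1,y_2)$ with the two hole-configurations already covers every pure hider move. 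Conjecture~\ref{pairdef} concerns the \emph{hider's} optimum and is irrelevant to certifying a searcher lower bound.
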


The table summarizes our results and conjectures for $k = j = 2$.

\begin{figure*}[htb]
\footnotesize{
\begin{center}
\begin{tabular}[center]{|c||c|c|c|c|}
\hline
$h$ & $v(2, 2, n, h)$ & validity & status & notes \\
\hline \hline
$[0, 1)$ & $0$ & every $n$ & \multirow{17}{*}{proved} & \multirow{6}{*}{proved in earlier papers for $n = 4$,} \phantom{\Big(}\\
\cline{1-3}
$\big[1, \frac{3}{2}\big)$ & $\frac{2}{n(n+1)}$ & \multirow{2}{*}{$n \ge 2$} & \multirow{69}{*}{bound} & \multirow{5}{*}{the same proof works for $n \ge 4$} \phantom{\Big(}\\
\cline{1-2}
$\big[\frac{3}{2}, \frac{5}{3}\big)$ & $\frac{3}{n(n+1)}$ & & & \phantom{\Big(}\\
\cline{1-3}
$\big[\frac{5}{3}, \frac{7}{4}\big)$ & $\frac{4}{n(n+1)}$ & $n \ge 3$ & & \phantom{\Big(}\\
\cline{1-3}
$\big[\frac{7}{4}, \frac{9}{5}\big)$ & $\frac{4.5}{n(n+1)}$ & $n \ge 4$ & & \phantom{\Big(}\\
\cline{1-3}\cline{5-5}
$\big[\frac{9}{5}, \frac{11}{6}\big)$ & $\frac{5}{n(n+1)}$ & $n \ge 5$ & & \multirow{7}{*}{proved in Theorem~\ref{solution2minus}} \phantom{\Big(}\\
\cline{1-3}
$\big[\frac{11}{6}, \frac{13}{7}\big)$ & $\frac{5.2}{n(n+1)}$ & $n \ge 6$ & & \phantom{\Big(}\\
\cline{1-3}
$\big[\frac{13}{7}, \frac{15}{8}\big)$ & $\frac{5.6}{n(n+1)}$ & $n \ge 7$ & & \phantom{\Big(}\\
\cline{1-3}
$\big[\frac{15}{8}, \frac{17}{9}\big)$ & $\frac{5\frac{2}{3}}{n(n+1)}$ & $n \ge 8$ & & \phantom{\Big(}\\
\cline{1-3}
$\big[\frac{17}{9}, 2\big)$ & $\frac{6}{n(n+1)}$ & $n \ge 5$ & & \phantom{\Big(}\\
\cline{1-3}\cline{5-5}
$\big[2, \frac{11}{5}\big)$ & $\frac{8}{n(n+1)}$ & \multirow{2}{*}{$n \ge 4$} & & \multirow{2}{*}{proved in earlier papers for $n = 4$,} \phantom{\Big(}\\
\cline{1-2}
$\big[\frac{11}{5}, \frac{7}{3}\big)$ & $\frac{9}{n(n+1)}$ & & &  the same proof works for $n \ge 4$ \phantom{\Big(}\\
\hline
$\big[\frac{7}{3}, \frac{17}{7} \big)$ & ? & \multirow{3}{*}{$n \ge 5$} & open & see Note~\ref{note} \phantom{\Big(}\\
\cline{1-2}\cline{4-4}\cdashline{5-5}
$\big[\frac{17}{7}, \frac{5}{2}\big)$ & $\frac{11}{n(n+1)}$ & & \multirow{7}{*}{upper} & see Lemma~\ref{val5/2} and Conjecture~\ref{conj5/2} \phantom{\Big(}\\
\cline{1-3}\cline{5-5}
$\frac{5}{2}$ & $\frac{12.5}{n(n+1)}$ & \multirow{3}{*}{$n \ge 6$} & \multirow{9}{*}{bound} & see Theorem~\ref{discretedoublelimit} and Conj.~\ref{discretedoublelimitconj} \phantom{\bigg(}\\
\cline{1-2}\cline{5-5}
$\big(\frac{5}{2}, \frac{8}{3}\big)$ & $\inf\limits_{\frac{a}{b} > h}\frac{2(a-1)(a-2)}{b(b-1) \cdot n(n+1)}$ & & \multirow{11}{*}{proved} & see Theorem~\ref{discretelimit} \phantom{\bigg(}\\
\cline{1-3}\cline{5-5}
$\big[\frac{8}{3}, \frac{67}{25}\big)$ & $\frac{14\frac{2}{53}}{n(n+1)}$ & & & \multirow{5}{*}{see Theorem~\ref{19/7} and Conj.~\ref{19/7conj}} \phantom{\bigg(}\\
\cline{1-2}
$\big[\frac{67}{25}, \frac{51}{19}\big)$ & $\frac{14\frac{2}{27}}{n(n+1)}$ & $n \ge 11$ & &  \phantom{\bigg(}\\
\cline{1-2}
$\big[\frac{51}{19}, \frac{19}{7}\big)$ & $\frac{14\frac{2}{11}}{n(n+1)}$ & & & \phantom{\bigg(}\\
\cline{1-3}\cline{5-5}
$\big[\frac{19}{7}, 3\big)$ & $\inf\limits_{\frac{a}{b} > h}\frac{2(a-1)(a-2)}{b(b-1) \cdot n(n+1)}$ & $n \ge 8$ & & see Theorem~\ref{discretelimit} (and Conj.~\ref{discretedoublelimitconj}) \phantom{\bigg(}\\
\hline
$\big\{3, 4, ..., \big\lfloor \frac{n+1}{2} \big\rfloor\big\}$ & $ \frac{2h^2}{n(n+1)}$ & every $n$ & proved & proved in Theorem~\ref{discretedoublelimit} \phantom{\Big(}\\
\hline
$\big(3, \frac{n}{2} + O(1)\big]$ & $ \frac{2h^2}{n(n+1)}$ & $n \ge 6$ & \multirow{1}{*}{upper} & see Conjecture~\ref{discretedoublelimitconj} \phantom{\Big(}\\ 
\cline{1-3}\cline{5-5}
$\frac{n+1}{2} \le \frac{h^2}{\lfloor h \rfloor}$ & \multirow{3}{*}{$\frac{\lfloor h \rfloor}{n}$} & \multirow{3}{*}{every $n$} &  \multirow{1}{*}{proved} & see Conjecture~\ref{largehconj} \phantom{\Big(}\\
\cline{1-1}\cline{4-5}
$\big[\frac{n+1}{2}, n\big]$ & & & proved & proved in Theorem~\ref{largeh} \phantom{\Big(}\\
\hline
\end{tabular}
\end{center}
}
\end{figure*}

\subsection{Extensions for $k = j > 2$}

\begin{Theorem} \label{hiderstrategyk}
If $k \ge 2$, then
\begin{equation} \label{vkknh}
v(k, k, n, h) \le \frac{h^k}{{n + k - 1 \choose k}}.
\end{equation}
\end{Theorem}

\begin{proof}
We convert the proof for $v(k, k)$ (Theorem~\ref{limitsolution}) to a proof of this problem as follows. The hider chooses how many nuts to put to each hole, choosing one of the ${n + k - 1 \choose k}$ possibilities uniformly at random. Now we consider the distance of each nut from the closest nut above it, or if there is no nut above it, then the depth of the nut (the distance from the top). We choose these $k$ depths $y'_1, y'_2, ... y'_k$ uniformly at random from the simplex $y'_i \ge 0$ ($\forall i \in \{1, 2, ..., k'\}$), $\sum y'_i \le 1$. From here, we can continue with the proof of Proposition~\ref{hiderstrategy} with exchanging $\lambda^k$ to ${n + k - 1 \choose k}$.
\end{proof}

Now we show that \eqref{vkknh} is sharp if $k = 2$ and $h \in \N$.

\begin{Theorem} \label{searcherstrategy2}
If $k \in \N$
\begin{equation*}
v(2, 2, n, h) \ge \frac{h^k}{{n + k - 1 \choose k}}.
\end{equation*}
\end{Theorem}

\begin{proof}
Consider the following strategy of the searcher. He chooses $h$ holes at random, and he is digging them parallelly until a nut is found (but until at most depth 1, when the game ends). If a nut is found at a depth $y$, then he chooses $h$ new holes, as follows. With probability $\frac{2h}{n+1}$, he chooses the hole in which the nut was found, and the remaining $h - 1$ or $h$ holes are randomly chosen from the other $n - h$ holes. Then he digs these holes in depth $1 - y$ (more).

Assume first that the hider caches the two nuts in two different holes. If exactly one of them are in one of the $h$ holes the searcher started with, and the other one is in the next $h$ holes, then the searcher finds both nuts. Therefore, the searcher finds both nuts in expectedly $h \cdot \big(h - \frac{2h}{n + 1}\big)$ number of cases out of the $n \choose 2$ pairs, which happens with probability
\begin{equation*}
\frac{h \cdot \big(h - \frac{2h}{n + 1}\big)}{{n \choose 2}}
 = \frac{2h \cdot \frac{(n+1)h - 2h}{n+1}}{n(n-1)} = \frac{2h \cdot \frac{(n-1)h}{n+1}}{n(n-1)} = \frac{2h^2}{n(n+1)}.
\end{equation*}

Assume now that the hider caches the two nuts in the same hole. If this nut is in the first $h$ holes chosen by the searcher, and the searcher chooses to continue digging in this hole, then he finds both nuts. This has probability
\begin{equation*}
\frac{h}{n} \cdot \frac{2h}{n+1} = \frac{2h^2}{n(n+1)}.
\end{equation*}

To sum up, this strategy of the searcher finds both nuts with probability at least $\frac{2h^2}{n(n+1)}$, against any strategy of the hider.
\end{proof}

\begin{Conjecture}
If $k \ge 2$ and $\frac{k + 1}{k - 1} \le h \le \frac{n}{k}$, then $v(k, k, n, h) = \frac{h^k}{{n + k - 1 \choose k}}$.
\end{Conjecture}

Most probably, this conjecture can be proved for any integer $h$. So this weaker, seemingly easier conjecture is the following.

\begin{Conjecture}
If $k \ge 2$ and $h \in \Z^+$ and $h \le \frac{n}{k}$, then $v(k, k, n, h) = \frac{h^k}{{n + k - 1 \choose k}}$.
\end{Conjecture}

The author believes that the same proof works here as for $k = 2$, in Theorem~\ref{searcherstrategy2}. Except that when the searcher finds a nut and chooses $h$ new holes, then he might choose again any holes which had a nut at its current bottom. We should find the right probabilities for each of these choices so as for each distribution of the number of nuts in the different holes, the searcher finds them with the same probability.\footnote{Update: Dömötör Pálvölgyi recently proved this conjecture in \cite{palvolgyi2017all}.}

\medskip

Finally, we note that there are many other potentially useful limit problems not yet considered, when $k \rightarrow \infty$.
E.g. if $j_i \rightarrow \infty$ and $\frac{j_i}{k_i}$ is convergent. These limit games might also be very useful, and these may also look differently from the original game.

\section{Manickam--Miklós--Singhi Conjecture} \label{MMSsec}


As a much simpler application of the limit theory techniques, we show an easy way to generalize a well-known conjecture.

\begin{Problem}[MMS-Problem] \label{MMS-problem}
For a fixed $n, k \in \N$, find a sequence $a_i \in \R$, $a_1 + a_2 + ... + a_n = 0$ such that if $\{i_1, i_2, ..., i_k\} \subset \{1, 2, ..., n\}$ is a uniform random subset with cardinality $k$, then $\Pr\big( a_{i_1} + a_{i_2} + ... + a_{i_k} > 0 \big)$ is the largest possible.
\end{Problem}

Denote this maximum probability by $M(n, k)$. Two sequences are equivalent if after applying a permutation on one sequence, the two events that the sum is positive are the same.

\begin{Conjecture}[Manickam--Miklós--Singhi]
If $4k \le n$, then $M(n, k) = \frac{n-k}{n}$. The only optimal solution up to equivalence is $1-n, 1, 1, 1, ..., 1$.
\end{Conjecture}

The Manickam--Miklós--Singhi Conjecture was introduced in 1987 in \cite{manickam1987number}, and it has recently received a lot of attention, especially because of its connection to the Erdős matching conjecture\cite{erdHos1965problem}. In 1987, they proved the conjecture for $n > (k+1)(k^k + k^2) + k = \Theta(k^{k+1})$. In 2012, Tyomkin\cite{tyomkyn2012improved} proved it for $n > k^2 (4e \log k)^k = (\log k)^{\Theta(k)}$. In 2013, Huang and Sudakov\cite{huang2014minimum} proved it for $33k^2 < n$. In 2014, Chowdhury, Sarkis and Shahriari\cite{chowdhury2014manickam} proved it for $8k^2 < n$. Then in 2015, Pokrovskiy\cite{pokrovskiy2015linear} proved it for $k < \eps \cdot n$, but this improves the previous results only for $k > 10^{45}$.

\medskip

We consider a limit problem when $n \rightarrow \infty$, and $\frac{k}{n}$ converges. Finding the right limit problem is not an obvious task. One of the problems is that the solution $1-n, 1, 1, 1, ..., 1$ does not have a limit when $n \rightarrow \infty$. We resolve this problem by considering the following equivalent version of the MMS-problem.

\begin{Problem}[MMS-Problem-v2] \label{MMS-problem-v2}
For a fixed $n, k \in \N$, find a sequence $a_i \in \R$, such that if $\{i_1, i_2, ..., i_k\} \subset \{1, 2, ..., n\}$ is a uniform random subset with cardinality $k$, then $\Pr\Big( a_{i_1} + a_{i_2} + ... + a_{i_k} > \frac{k}{n}\sum\limits_{i=1}^{n} a_i \Big)$ is the largest possible.
\end{Problem}

This is clearly equivalent to the original problem, we only need to change each $a_i$ to $\frac{1}{n} \sum\limits_{i=1}^n a_i$. Now $-1, 0, 0, ..., 0$ is an equivalent form of the conjectured optimal solution, and we can say that the infinite sequence $-1, 0, 0, ...$ is a limit of it. We needed a few more observations to define the following limit problem.

\begin{Problem}[MMS-Limit] \label{limitproblem-simple}
For a fixed $p \in (0, 1)$, we are looking for a countable sequence $a_1, a_2, ...$ of real numbers with $\sum a_i^2 < \infty$ which maximizes $\Pr\big( \sum a_i (x_i - p) > 0 \big)$, where $x_1, x_2, ...$ are independent indicator variables with probability $p$.
\end{Problem}

There is a compact version of the limit problem, which is a kind of closure of the previous version. This will be a bit more difficult but easier to handle.

\begin{Problem}[MMS-CompactLimit] \label{limitproblem}
For a fixed $p \in (0, 1)$, we are looking for a countable sequence $a_1, a_2, ...$ of real numbers with $\sum a_i^2 < \infty$ and a real number $d$ which maximizes $\Pr\big( \sum a_i (x_i - p) + d x_0 > 0 \big)$, where $x_1, x_2, ...$ are indicator variables with probability $p$, and $x_0$ is a variable with standard normal distribution, and $x_0, x_1, x_2, ...$ are independent.
\end{Problem}

We call this problem a limit problem of the Manickam--Miklós--Singhi Conjecure, because the following theorem holds.

\begin{Theorem} \label{MMS-limit}
The supremum probabilities in MMS-Limit and MMS-CompactLimit are the same, denoted by $M(p)$, and for any $p \in (0, 1)$,
\begin{equation*}
\limsup_{\delta \rightarrow 0} M(p + \delta) = \limsup_{n_i \rightarrow \infty,\ \frac{k_i}{n_i} \rightarrow p} M(n_i, k_i).
\end{equation*}
\end{Theorem}

Before the proof, we show why this theorem is useful. We analyzed the limit problem for all values of $p$, the results are summarized in Figure~\ref{MMS-fig}. This leads to a new conjecture as follows.

\begin{figure}[]
{\scriptsize M(p)}
\begin{center}
\includegraphics[width=0.97\textwidth]{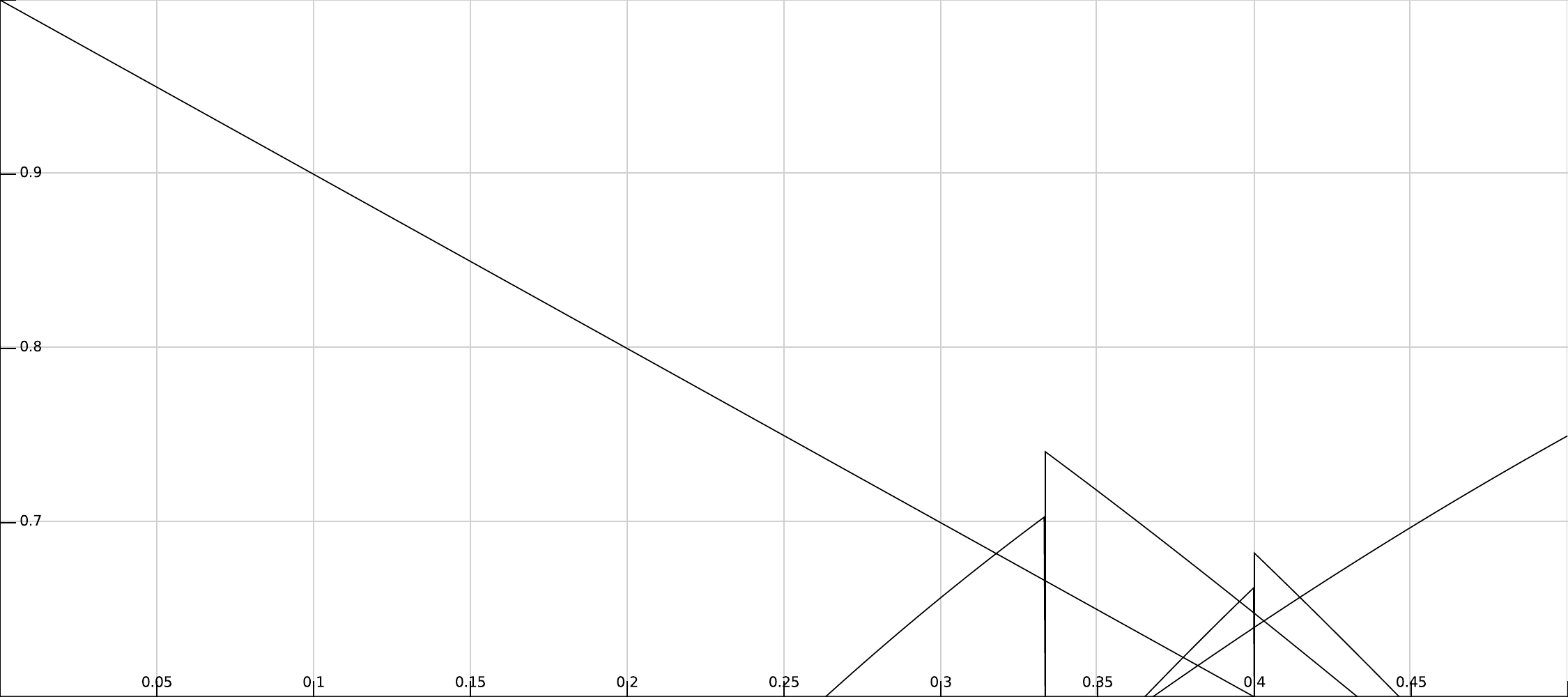}{\scriptsize \ p}
\end{center}
\caption{$M(p)$ is the maximum of the functions in the figure. If $p$ is between 0 and $0.317...$ (or exactly $1/3$), then the solution $-1$ is optimal (and the other coefficients are 0), between $0.317...$ and $1/3$ the optimal solution is $1, 1, 1$, between $1/3$ and $0.395...$ (and at $0.4$) the solution $-1, -1, -1$, between $0.395...$ and $0.4$ the solution $1, 1, 1, 1, 1$, between $0.4$ and $0.414...$ the solution $-1, -1, -1, -1, -1$, and between $0.414...$ and $0.5$ the solution $1, 1$ is optimal.}
\label{MMS-fig}
\end{figure}

\begin{Conjecture}
The optimal solution of the limit game has the form $a_1 = a_2 = ... = a_q$ where $q \in \{1, 2, 3, 5\}$, and all other coefficients are 0. This is the only optimal solution up to equivalence.
\end{Conjecture}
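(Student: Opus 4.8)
We sketch a possible route to this conjecture. The idea is to collapse the infinite-dimensional optimization of Problem~\ref{limitproblem} onto a short finite list of candidate solutions and then compare those candidates explicitly. Since $\Pr\big(\sum a_i(x_i-p)+dx_0>0\big)$ is invariant under multiplying $(a_i)_{i\ge1}$ and $d$ by a common positive constant, we may normalize $\sum a_i^2+d^2=1$; write $T=\sum a_i(x_i-p)$. Because $\operatorname{Var}(T)=p(1-p)\sum a_i^2$, the joint contribution of all but the $N$ largest coefficients tends to $0$ in probability as $N\to\infty$, so the supremum is unchanged if we restrict to finitely supported sequences. Next, an optimal configuration may be taken with $d=0$: the independent symmetric Gaussian $dx_0$ only smooths the law of $T$, and symmetrizing over the sign of $x_0$, together with the device from the proof sketch of Theorem~\ref{MMS-limit} of replacing a Gaussian term by many tiny opposite-signed Bernoulli terms (licensed by Lemma~\ref{CLT}), should show $d>0$ is never needed. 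A separate compactness argument is then required to see that the restricted supremum is actually attained and that arbitrarily large supports cannot help: the best value over sequences supported on $\{1,\dots,N\}$ is attained by continuity on a compact set and is nondecreasing in $N$, and one must show this sequence stabilizes.

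The heart of the reduction is to show that an optimal finitely supported sequence has all its nonzero coefficients equal. Suppose instead two support coefficients satisfy $a_i\neq a_j$. Conditioning on all variables except $x_i,x_j$ writes $T=c+a_i(x_i-p)+a_j(x_j-p)$ with $c$ the random contribution of the remaining terms, and we are free to choose $(a_i,a_j)$ with $a_i^2+a_j^2$ held fixed so as to maximize $\E_c\,\Pr\big(c+a_i(x_i-p)+a_j(x_j-p)>0\big)$. Analyzing the four-point law of $a_i(x_i-p)+a_j(x_j-p)$ and averaging over $c$ should show that this value strictly increases when both coefficients are replaced by $\pm\sqrt{(a_i^2+a_j^2)/2}$ with a suitable common sign pattern, unless $a_i=a_j$ already; iterating this pairwise equalization forces all nonzero coefficients equal. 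Hence the optimum has $T=c\,(B_q-qp)$ for some $q\ge1$, $c\neq0$, with $B_q\sim\operatorname{Binomial}(q,p)$, so that $\Pr(T>0)$ equals $\Pr(B_q>qp)$ or $\Pr(B_q<qp)$ according to $\operatorname{sgn}(c)$.

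It then remains to verify that $\max_{q\ge1}\max\big(\Pr(B_q>qp),\Pr(B_q<qp)\big)$ is attained precisely at $q\in\{1,2,3,5\}$, with the sign pattern of Figure~1, for every $p\in(0,1)$; the $p\leftrightarrow 1-p$ symmetry lets us assume $p\le\tfrac12$. This is an explicit problem about binomial tails: each candidate is a step function of $p$ with jumps where $qp$ crosses an integer, and as $q\to\infty$ the probability tends to $\tfrac12$ by the central limit theorem, so only small $q$ can be optimal and one compares the finitely many surviving step functions directly. Uniqueness up to equivalence then follows from strictness of the improvements above together with strict separation of the surviving candidates --- except at the finitely many rational $p$ with $qp\in\Z$, where $T$ has an atom at $0$ and the optimum and its uniqueness must be read off the one-sided limits in $p$, which is precisely why Theorem~\ref{MMS-limit} is stated with $\liminf_{\delta\to0}$ and $\limsup_{\delta\to0}$.

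I expect the equalization step to be the main obstacle: $\Pr(T>0)$ is not a smooth function of the coefficients, because the law of $T$ carries atoms, so the claim ``averaging two coefficients helps'' is genuinely delicate near configurations in which some partial sum of the $a_i$ is close to $0$, and one must actually exclude non-equal local optima rather than merely check a first-order condition. A close second difficulty is the compactness and stabilization part of the first step --- that the supremum is attained and that large supports cannot help --- since the natural domain of optimization is not compact.
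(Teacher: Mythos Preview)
The statement you are trying to prove is presented in the paper as a \emph{conjecture}, not a theorem. The paper gives no proof; the conjecture is arrived at by computing the candidate curves shown in Figure~1 and observing their upper envelope. The paper does outline a \emph{possible} route --- compactness of the normalized solution space together with a neighborhood-stability lemma, to be combined into a computer-assisted finite covering --- but this is explicitly left as an open program. So there is no ``paper's own proof'' to compare against.

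Your route is different and more structural, and you are candid that the equalization step is the crux. Unfortunately that step is not merely delicate: as you have stated it, it is false. Take $p=\tfrac14$, $a_1=-1$, $a_2=-\tfrac14$, all other $a_i$ and $d$ zero. Then $T=\tfrac{5}{16}-x_1-\tfrac14 x_2$, and one checks directly that $\Pr(T>0)=\Pr(x_1=0)=\tfrac34$. Your equalization replaces $(-1,-\tfrac14)$ by $(\pm r,\pm r)$; for every sign pattern the resulting probability is one of $(1-p)^2=\tfrac{9}{16}$, $2p-p^2=\tfrac{7}{16}$, or $p(1-p)=\tfrac{3}{16}$, all strictly below $\tfrac34$. (More generally this occurs for every $p<\tfrac{3-\sqrt5}{2}$ and every $0<\eps<\tfrac{p}{1-p}$ in place of the second coefficient.) So pairwise equalization can strictly \emph{decrease} $\Pr(T>0)$, and the iteration you propose cannot drive an arbitrary configuration toward the equal-coefficient form. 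At best one might hope that every \emph{global optimum} has equal nonzero coefficients, but that requires optimality to enter the argument in an essential way, not merely through a two-coordinate perturbation with the rest of the configuration frozen.

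The reduction to $d=0$ has a similar status: smoothing $T$ by an independent Gaussian can \emph{raise} $\Pr(T>0)$ when the atoms of $T$ sit on the wrong side of $0$, so ``the Gaussian only smooths'' is not by itself an argument that $d=0$ is optimal. In short, your proposal identifies the right shape of the problem, but both of its main reductions are open and the central one is false in the form you give it --- which is consistent with the paper treating the statement as a conjecture rather than a theorem.
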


Now we are ready to form the corresponding conjecture for the original problem.

\begin{Conjecture}
The optimal solution of the original finite game has the form $a_1 = a_2 = ... = a_q$ where $q \in \{1, 2, 3, 5\}$, and $a_{q+1} = a_{q+2} = ... = a_n = - \frac{q}{n-q} a_1$. This is the only optimal solution up to equivalence.
\end{Conjecture}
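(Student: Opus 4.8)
The plan is to derive this finite conjecture from its limit counterpart by running Theorem~\ref{MMS-limit} ``in reverse''. Step~1: prove the preceding conjecture, i.e.\ classify the maximizers of Problem~\ref{limitproblem} for every $p\in(0,1)$ and show that the only ones are the flat vectors $a_1=\dots=a_q$ with $q\in\{1,2,3,5\}$ and all other coefficients (including $d$) equal to $0$, on the $p$-ranges of Figure~1. Step~2: using the explicit ``limit $\to$ finite'' conversion from the proof of Theorem~\ref{MMS-limit} --- keep the few coordinates of largest absolute value, replace the bulk by a two-valued, variance-matching tail --- show that a finite optimizer for large $n$ must be close, in the normalized sense, to one of these $q\in\{1,2,3,5\}$ candidates, and then upgrade ``close'' to ``equal''. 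Step~3: handle the remaining small $n$ by a bounded case analysis, combined with the known partial results toward MMS in the range $4k\le n$.

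For Step~1 the Gaussian coordinate is what makes the finite-dimensional part tractable: once an optimizer is reduced to a finite support (and $d=0$) --- the one genuinely infinite-dimensional point, where one argues that any residual Gaussian-type contribution, either from a long small tail or from $d\neq 0$, can only hurt --- the objective becomes a smooth function of the directions $a_i/\|a\|$, and a symmetrization argument over the active coordinates forces the flat vector, after which only finitely many support sizes need be compared. This is essentially the analysis already summarized in Figure~1; the new content is the \emph{uniqueness}. For Step~2, fix the candidate $a_1=\dots=a_q=1$, $a_{q+1}=\dots=a_n=-q/(n-q)$; Lemma~\ref{CLT} says that the positivity probability of any competing finite solution is determined, up to $o(1)$, by its normalized large coordinates together with its total variance, so Step~1 forces any near-optimal competitor toward one of the candidates, and a local/perturbative analysis near each candidate (again using the smoothness that Lemma~\ref{CLT} supplies) should pin down optimality and uniqueness exactly once $n$ exceeds an effective threshold.

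The main obstacle is that Theorem~\ref{MMS-limit} as stated controls only the optimal \emph{value} $M(n,k)$, and only asymptotically, with a residual gap between $\liminf_{\delta}M(p+\delta)$ and $\limsup_{\delta}M(p+\delta)$ precisely at the phase transitions of Figure~1 (around $p=0.317\dots,\ 1/3,\ 0.395\dots,\ 0.4,\ 0.414\dots$), where two of the candidate solutions tie. To get an exact structural statement for finite $n$ one needs two quantitative upgrades: first, a stability version of Step~1 --- every $\eps$-suboptimal limit solution is $O(\sqrt{\eps})$-close to a flat candidate, with a modulus uniform in $p$, which requires strict second-order optimality of the flat vectors and is exactly what degenerates at the transition points; and second, a Berry--Esseen-strength, non-asymptotic refinement of Lemma~\ref{CLT} for sampling without replacement, so that the $o(1)$ in the conversion becomes an explicit function of $n$ and of the tail parameter. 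Making the first uniform across the regime boundaries, and separately handling the regime $n<4k$ where the $q\in\{2,3,5\}$ solutions occur and the classical MMS arguments say nothing, is where I expect essentially all the work to go.
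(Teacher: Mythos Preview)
The statement you are attempting to prove is a \emph{Conjecture} in the paper, not a theorem; the paper offers no proof. What the paper does give, immediately after stating it, is a brief sketch of a \emph{possible} route: use the compactness of the (normalized) solution space of the limit problem together with a neighborhood lemma to cover that space by finitely many open regions on each of which one can verify the conjectured inequality (most likely by computer), and then transfer each such local verification to the finite problem via an explicit, effective version of Lemma~\ref{CLT}.

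Your proposal is in the same spirit --- solve the limit problem first, then push back to finite $n$ via a quantitative CLT --- but you frame the first step differently. The paper's route is deliberately ``soft'': rather than proving uniqueness and stability of the limit optimizers analytically, it would simply exhaust the compact space by neighborhoods on which $\Pr(\cdot>-\eps)<v$ already fails, reducing everything to a large finite check. Your Step~1, by contrast, asks for a structural classification plus a uniform $O(\sqrt{\eps})$ stability estimate; this is more ambitious and, as you correctly note, degenerates precisely at the phase transitions of Figure~1, which is exactly the place where the paper's compactness-cover approach sidesteps the difficulty by allowing the neighborhoods to shrink. Your identification of the need for a Berry--Esseen-type refinement of Lemma~\ref{CLT} matches the paper's remark that one would need an explicit $\delta(\eps)$.

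In short: there is no proof in the paper to compare against, only a proposed strategy; your plan is compatible with that strategy but trades the paper's finite-cover/computer-check step for an analytic stability argument that is cleaner in principle but harder at the regime boundaries. Neither is a proof as it stands, and you are right that essentially all the work remains.
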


The proof of Theorem~\ref{MMS-limit} will be based on the following version of the Central Limit Theorem, for which we omit the proof here.

\begin{Lemma} \label{CLT}
For every $p \in (0, 1)$ and $\eps > 0$, there exists $\delta > 0$ such that for any sequence $-\delta < a_1, a_2, ..., a_n < \delta$
and $\big| \frac{k}{n} - p \big| < \delta$ and $t \in \mathbb{R}$, the following holds.
If $\{i_1, i_2, ..., i_k\} \subset \{1, 2, ..., n\}$ is a uniform random subset with cardinality $k$, then
\begin{equation*}
\Phi\Big(\frac{t - \eps}{\sigma}\Big) - \eps < \Pr\Big( a_{i_1} + a_{i_2} + ... + a_{i_k} < \frac{k}{n} \sum_{i = 1}^n a_i + t \Big) < \Phi\Big(\frac{t + \eps}{\sigma}\Big) + \eps,
\end{equation*}
where $\sigma^2 = \operatorname{Var}\big(a_{i_1} + a_{i_2} + ... + a_{i_k}\big)$ and $\Phi$ is the distribution function of the standard normal distribution. \qedhere
\end{Lemma}

\begin{proof}[Proof of Theorem~\ref{MMS-limit}]
We can convert any solution of the MMS-Problem (Problem \ref{MMS-problem} or \ref{MMS-problem-v2}), MMS-Limit (Problem~\ref{limitproblem-simple}) and MMS-CompactLimit (Problem~\ref{limitproblem}) to each other. We only need to prove that the conversion error of the probabilities in question tends to 0 when $n \rightarrow \infty$.

\bigskip

A solution $\big((a_i), d\big)$ of MMS-CompactLimit can be converted to the solution of MMS-Limit by simply extending the sequence $(a_i)$ with $\big\lfloor\frac{d}{p(1-p)\eps}\big\rfloor$ number of $\eps$. The Central Limit Theorem shows that if $\eps \rightarrow 0$, then the sequence of the implied distributions converge to the distribution of $\sum a_i (x_i - p) + d x_0$. Therefore, the implied probabilities also tend to $\Pr\big( \sum a_i (x_i - p) + d x_0 > 0 \big)$ provided that $\Pr\big( \sum a_i (x_i - p) + d x_0 = 0 \big) = 0$. This holds if $d > 0$. If $d = 0$, then $(a_i)$ is already a perfect conversion.

\bigskip

Now we are converting a solution $a_1, a_2, ..., a_n$ of the MMS-problem-v2 to a solution of MMS-CompactLimit. We normalize the terms so that if we delete the largest and smallest $\eps n$ elements, then the average of the rest will be 0, and $\sum_{i = 1}^{n} a_i^2 = 1$. This does not change whether
$a_{i_1} + a_{i_2} + ... + a_{i_k} > \frac{k}{n}\sum a_i$.
We sort the elements $|a_1| \ge |a_2| \ge ... \ge |a_n|$. One of the following two cases must hold.

Case 1. There exists an index $l < 2 \log^2 n$ such that $|a_{l+1}| < 2 |a_{l + \log n}|$. In this case, with an approximation error tending to 0 as $n \rightarrow \infty$, the indices $1, 2, ..., l$ are chosen independently into $S = \{i_1, i_2, ..., i_k\} \subset \{1, 2, ..., n\}$, and Lemma~\ref{CLT} implies that $\sum\limits_{i \in S,\ i > l} a_i$ has an approximately normal distribution with approximately the same expected value $\frac{k}{n} \sum\limits_{i = l+1}^n a_i$ and variance $d$ for all $\big\{ i \le l, i \in S \big\}$. Therefore, $\big((a_1, a_2, ..., a_l), d \big)$ provides approximately the same probability in MMS-CompactLimit.

Case 2. For an $l > \log n$, there exists a sequence $j_1 < j_2 < ... < j_{2l} < 2 \log^2 n$ such that $2^i |a_{j_i}|$ is monotone decreasing. In this case, any $S \setminus \{j_1, j_2, ..., j_l\}$ can be extended in at most one way to get a set $S$ with $\Big| \sum\limits_{i \in S} a_i - \frac{k}{n} \sum\limits_{i=1}^{n} a_i \Big| < j_l$. Therefore, this event for a random $S$ has a probability tending to 0. $\sum\limits_{i = 2l+1}^{n} a_i < n \cdot j_{2l} < n \cdot 2^{-l} \cdot j_l < j_l$, therefore, if we exchange all elements $a_{2l+1}, a_{2l+2}, ..., a_n$ to 0, then this changes $\Pr\Big( \sum\limits_{i \in S} a_i > \frac{k}{n} \sum\limits_{i=1}^{n} a_i \Big)$ by asymptotically 0. The joint distribution of the events $a_i \in S$ for $i \le 2l$ are approximately the same as for independent events with probability $p$, therefore, $\Pr\Big( a_{i_1} + a_{i_2} + ... + a_{i_k} > \frac{k}{n} \sum\limits_{i=1}^{n} a_i \Big) \approx \Pr\Big( \sum\limits_{i=1}^{2l} a_i (x_i - p) > 0 \Big)$ with an error tending to 0.

\bigskip

A solution $(a_i)$ of MMS-Limit can be converted to MMS-problem as follows. Notice that a bounded subsequence of $(a_i)$ provides the same distribution with an arbitrary small error in $L_2$ norm. We either had a finite number of non-0 elements, or $\Pr\big( \sum a_i (x_i - p) = 0 \big) = 0$ (because we can choose an infinite sequence $j_1, j_2, j_3, ...$ such that $2^i |a_{j_i}|$ is monotone decreasing, and then we can use the technique in Case 2). Therefore, the change in $\Pr\big( \sum a_i (x_i - p) > 0 \big)$ tends to 0 if $n \rightarrow \infty$.
If we extend this subsequence with a large number of $0$ values to a total of $n$ elements, then choosing $\lfloor np \rfloor$ from them provides an approximately independent choice from the non-0 elements, with an arbitrary small error in the probabilities.
\end{proof}

\subsection{A suggestion for proving the MMS-conjecture}

Now we show a possible way to prove the conjecture by analyzing a very large but finite number of cases (most probably using a computer).
We say that a feasible normalized solution, or in short, a \emph{solution} for MMS-CompactLimit is a sequence and a number $\big((a_i)_{i \in \N}, d \big)$, where
\begin{equation*}
\operatorname{Var}\Big(\sum a_i (x_i - p) + d x_0 \Big) = p(1-p)\sum_{i \in \N} a_i^2 + d^2 = 1.
\end{equation*}

We define the distance between two solutions $\big((a_i), d_{\alpha}\big)$ and $\big((b_i), d_{\beta}\big)$ by
\begin{equation*}
  \dist \inf_{\sigma, \pi, k} \Big(\sum_{i=1}^{k} \big|a_{\sigma_i} - b_{\pi_i}\big| + \sup_{i>k}\big|a_{\sigma_i}\big| + \sup_{i>k}\big|b_{\pi_i}\big|\Big),
\end{equation*}
where $\sigma$ and $\pi$ are permutations on $\Z^+$.

\begin{Lemma}
The topological space of the solutions induced by $\dist$ is compact.
\end{Lemma}

\begin{Lemma}
If a solution $s = \big((a_i), d\big)$ satisfies $\Pr\big( \sum a_i (x_i - p) + d x_0 > -\eps \big) < v$, then there exists a neighborhood of $s$ in which for every solution $\big((a_i'), d'\big)$, we have $\Pr\big( \sum a_i' (x_i - p) + d' x_0 > 0 \big) < v$.
\end{Lemma}

\begin{proof}[Sketch of proof]
If ${\rm dist} \Big( \big((a_i), d_{\alpha}\big), \big((b_i), d_{\beta}\big) \Big)$ is small, then $\sum_{i=1}^{k} a_{\sigma_i} x_i \approx \sum_{i=1}^{k} b_{\pi_i} x_i$. About the rest of the terms, Lemma~\ref{CLT} shows that $\sum\limits_{i > k} a_{\pi_i} x_{\pi_i}$ has an approximately normal distribution, with approximately the same variance. Therefore, the two distributions are almost the same.
\end{proof}

With these two lemmas, we can hope that we can cover the solution space with a finite number of regions (open sets) so that we can show the conjectured inequality in each of these regions. Then we could modify these proofs so as to make it valid for the original discrete problem, as well.
For this, we would also need a version of Lemma~\ref{CLT} which gives an explicit $\delta > 0$ for each $\eps > 0$.

A similar but more detailed proof plan will be shown for the next related problem, in Section~\ref{KR-suggestion}.

\subsection{A new related question}

\begin{Problem} \label{problem-vectors}
There are $n, k \in \N$, and a vector space $V$ over $\R$ and $T \subset V$ is a convex closed set. We want to find vectors $a_1, a_2, ..., a_n$ with $a_1 + a_2 + ... + a_n = 0$ which maximizes $\Pr\big( a_{i_1} + a_{i_2} + ... + a_{i_k} \in T \big)$, where $\{i_1, i_2, ..., i_k\} \subset \{1, 2, ..., n\}$ is a uniform random $k$-element subset.
\end{Problem}

This problem can be reduced to the following problem.

\begin{Problem} \label{problem-numbers}
For fixed parameters $n, k \in \N$ and $1 < M \in \R$, find a sequence $a_i \in \R$, $a_1 + a_2 + ... + a_n = 0$ such that if $\{i_1, i_2, ..., i_k\} \subset \{1, 2, ..., n\}$ is a uniform random subset with cardinality $k$, then $\Pr\big( 1 \le a_{i_1} + a_{i_2} + ... + a_{i_k} \le M \big)$ is the largest possible.
\end{Problem}

The key lemma to show the equivalence of these problems is the following.

\begin{Lemma}
For each convex closed set $S \subset V$, $0 \notin S$, there exists two vectors $v, w \in V$ and positive real numbers $\lambda_1, \lambda_2 \in \R^+$ such that
$\lambda_1 v, \lambda_2 w \in S$ and $\forall x \in S\: x \cdot w \in [\lambda_1, \lambda_2]$.
\end{Lemma}

Problems \ref{problem-vectors} and \ref{problem-numbers} with $M = \lambda_2 / \lambda_1$ are equivalent because of the following reason. On one hand, for a sequence for Problem~\ref{problem-vectors}, the scalar product of the vectors with $w / \lambda_1$ provides a equally good sequence for Problem~\ref{problem-numbers}. On the other hand, a sequence for Problem~\ref{problem-numbers} multiplied by $\lambda_1 \cdot v$ provides an equally good solution for Problem~\ref{problem-vectors}.

Using the limit theory techniques in an analogous way, we can form the following conjecture.

\begin{Conjecture}
The optimal solution for Problem~\ref{problem-numbers} has the form $a_1 = a_2 = ... = a_q$ and $a_{q+1} = a_{q+2} = ... = a_n = - \frac{q}{n-q} a_1$. $q$ is bounded on $\frac{k}{n} \in [0, 1]\ \setminus [\frac{1}{2} - \eps,\ \frac{1}{2} + \eps]$ for all $\eps > 0$.
\end{Conjecture}

\section{Kikuta--Ruckle Conjecture} \label{KRsec}

We can use the same technique for the following generalization of the MMS-Problem, defined by Kikuta and Ruckle.\cite{kikuta2002continuous, alpern2013search}

\begin{Problem}[KR-Problem] \label{Kikuta}
$n, k \in \N$, and $d \in (0,1)$ are given. We want to find nonnegative real numbers $a_1, a_2, ..., a_n \ge 0$ with $a_1 + a_2 + ... + a_n = 1$ which maximizes $\Pr\big( a_{i_1} + a_{i_2} + ... + a_{i_k} > d \big)$, where $\{i_1, i_2, ..., i_k\} \subset \{1, 2, ..., n\}$ is a uniform random $k$-element subset.
\end{Problem}

Denote this supremum\footnote{We believe that this is a maximum. If we use ``$\ge d$'' rather than ``$> d$'', than due to the compactness of the space of solutions, we can even prove it.} probability by $K(k, n, d)$. Notice that if $d < \frac{k}{n}$, then $a_i = \frac{1}{n}$ provides $K(k, n, d) = 1$ and if $d = \frac{k}{n}$, then we get back the MMS-Problem.

Furthermore, we can get the Kikuta--Ruckle problem from Alpern's Caching Game by the following modification and by taking the limit $k \rightarrow \infty$ and $\frac{k}{n} \rightarrow d$. This modification is that we replace the overall hiding time limit to the restriction that the searcher cannot use a depth more than 1 (or other than 1), and we consider the limit $k \rightarrow \infty$ and $\frac{j}{k} \rightarrow d$.

\begin{Conjecture}[Kikuta--Ruckle]
For all $n, k \in \N$, and $d \in (0, 1)$, there is an optimal solution for the KR-Problem of the form $a_1 = a_2 = ... = a_s = \frac{1}{s}$ and $a_{s+1} = a_{s+2} = ... = a_n = 0$ for some $s \in \{1, 2, ..., n\}$.
\end{Conjecture}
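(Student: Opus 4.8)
The plan is to run the limit-theory scheme of the previous section once more. Fix $p\in(0,1)$ and $d\in(0,1)$ and consider $(n_i,k_i)$ with $n_i\to\infty$ and $k_i/n_i\to p$. The first task is to find the right limit problem. Given a feasible solution $(a_1,\dots,a_n)$ --- a probability vector --- split its coordinates into a bounded number of \emph{atoms}, those whose mass stays bounded away from $0$, and the \emph{dust}, of total mass $\alpha$, which (after discarding $o(1)$ contributions) is spread over $\omega(1)$ coordinates of individual mass $o(1)$. In contrast to the MMS-Problem, here the constraint $\sum a_i=1$ fixes the overall scale, so the dust cannot generate a nondegenerate Gaussian: splitting mass $\alpha$ among $m\to\infty$ coordinates makes the dust mass captured by the random $k$-set equal to $p\alpha$ up to fluctuations of variance $\Theta\bigl(\alpha^2p(1-p)/m\bigr)\to 0$, while any fixed finite family of coordinates is asymptotically included in the random $k$-set as independent $\operatorname{Bernoulli}(p)$ variables. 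This points to the limit problem $\mathrm{KR}(p,d)$: maximize $\Pr\bigl(\sum_i a_i\xi_i+p\alpha>d\bigr)$ over all $a_1\ge a_2\ge\cdots\ge 0$ with $\alpha:=1-\sum_i a_i\ge 0$, where $\xi_1,\xi_2,\dots$ are i.i.d.\ $\operatorname{Bernoulli}(p)$ variables; write $\widehat K(p,d)$ for its value.

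The second step is a transfer theorem of the same shape as Theorem~\ref{MMS-limit}, proved as in its sketch: a finite solution is sent to a limit solution by keeping the largest coordinates as atoms and the rest as dust, a limit solution is sent to a large finite one by realizing each atom on a single box and spreading the dust uniformly over a growing set of boxes, and Lemma~\ref{CLT} together with the variance estimate above forces the conversion error to $0$, yielding
\begin{equation*}
\liminf_{\delta\to 0}\widehat K(p+\delta,d)\ \le\ \liminf_i K(k_i,n_i,d)\ \le\ \limsup_i K(k_i,n_i,d)\ \le\ \limsup_{\delta\to 0}\widehat K(p+\delta,d).
\end{equation*}
It then remains to solve $\mathrm{KR}(p,d)$. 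If $d<p$, taking $\alpha=1$ already gives value $1$. If $d\ge p$, dust only hurts --- rescaling the atoms to sum to $1$ replaces the effective threshold $d-p\alpha$ by $d(1-\alpha)\le d-p\alpha$ --- so one may assume $\sum_i a_i=1$ with no dust, and the whole conjecture reduces to the core claim
\begin{equation*}
\sup_{a_i\ge 0,\ \sum_i a_i=1}\ \Pr\!\left(\sum_i a_i\xi_i>d\right)\ =\ \sup_{q\in\N}\ \Pr\bigl(\operatorname{Binom}(q,p)>qd\bigr),
\end{equation*}
whose right-hand side is attained by the equal-weights solutions $a_1=\cdots=a_q=1/q$. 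Combining the core claim with the transfer theorem and with the discreteness of the possible values (for fixed $(n,k)$ the map $d\mapsto K(k,n,d)$ is a nonincreasing step function with values in $\{\,j/\binom{n}{k}:0\le j\le\binom{n}{k}\,\}$), one gets the Kikuta--Ruckle Conjecture for all large $n$, and the finitely many small $n$ can be checked directly.

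The heart of the matter, and the step I expect to be the main obstacle, is the core claim. It cannot be proved with the support size fixed: already for $p=\tfrac12$, $d=0.4$ the unequal vector $(0.5,0.3,0.2)$ gives probability $\tfrac58$, beating $\tfrac12$ for $(\tfrac13,\tfrac13,\tfrac13)$, while $(\tfrac12,\tfrac12)$ gives $\tfrac34$ --- so the optimal support size genuinely moves and no Schur-type monotonicity within a fixed dimension is available. Equivalently, one must show that among the ``bad'' down-sets $\{\omega\in\{0,1\}^{\N}:\sum_i a_i\omega_i\le d\}$ cut out by a unit-weight linear threshold, the $\operatorname{Bernoulli}(p)$-measure is minimized by a Hamming ball $\{|\omega|\le qd\}$. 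The natural attack is to symmetrize the weight vector towards equal weights by local ``split and merge'' moves, in the spirit of the Kruskal--Katona and Harper compressions; the difficulty is that, unlike in the size-based setting, merely averaging two unequal weights can strictly increase the measure (for instance $(0.5,0.25,0.25)\mapsto(0.375,0.375,0.25)$ raises the bad-set measure from $\tfrac38$ to $\tfrac12$), so one must pin down the right moves together with a monovariant controlling them. Finally, to also dispatch the finite problem for moderate $n$ one needs, as in the MMS proposal, an effective form of Lemma~\ref{CLT} with an explicit $\delta>0$ for each $\eps>0$.
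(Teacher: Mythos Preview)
The statement you are attempting to prove is a \emph{Conjecture} in the paper, not a theorem: the paper gives no proof, and in fact the author writes that he ``strongly believes that the conjecture is true, but it is rather `accidentally true' and he doubts that there exists a simple proof.'' So there is no paper proof to compare against; what the paper does is use limit theory to refine the conjecture (producing the long list of candidate values of $s$), not to settle it.

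Your proposal has two genuine gaps. The first you already flag: the ``core claim'' that
\[
\sup_{a_i\ge 0,\ \sum a_i=1}\Pr\Big(\sum_i a_i\xi_i>d\Big)=\sup_{q\in\N}\Pr\big(\mathrm{Binom}(q,p)>qd\big)
\]
is unproved, and it is not clearly easier than the original conjecture --- it is essentially the Kikuta--Ruckle statement with sampling without replacement replaced by i.i.d.\ Bernoulli sampling. The paper's own refined conjecture, with its long and delicate list of optimal $s$, suggests that no soft symmetrization argument will work here; your own counterexamples to averaging moves point the same way.

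The second gap is more serious and is not acknowledged. Even granting the core claim and the transfer theorem, all you obtain is that the \emph{values} $K(k_i,n_i,d)$ and $\max_s \Pr\big(\text{Hypergeom}(n_i,s,k_i)>sd\big)$ have the same limit $\widehat K(p,d)$. The Kikuta--Ruckle Conjecture, however, is a statement about the \emph{form of the optimizer} at each finite $n$, not about the limiting value. Your discreteness observation does not bridge this: for fixed $(n,k)$ the values lie in $\{j/\binom{n}{k}\}$, but as $n\to\infty$ the grid spacing $1/\binom{n}{k}$ tends to $0$, so knowing that the general optimum and the best uniform strategy differ by $o(1)$ does not force them to coincide. (Concretely, nothing rules out a non-uniform solution beating every uniform one by, say, $1/\binom{n}{k}$ for all large $n$.) A value-level limit theorem cannot by itself yield a structure theorem for finite $n$; one would need a stability or rigidity statement for near-optimizers, and none is offered. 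Note also that the paper's own limit problem (Problem~\ref{Kikuta-limit}) lives in the regime $d\to p$ with a surviving Gaussian term, which is different from your fixed-$d$ regime where the dust degenerates; neither regime yields the conjecture.
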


The conjecture says nothing about the optimal value of $s$. The authors as well as other researchers on the topic found a very chaotic behaviour of this value. However, it would be useful to know the value if we want to prove the conjecture. Searching for the optimal values for small constant values $n, k$ did not really help, we will shortly see the reason of it. Instead, we will consider what happens if $n \rightarrow \infty$ and hereby we form a conjecture about the value of $s$.

The KR-Problem has one more parameter than the MMS-Problem, therefore, we have a larger freedom about defining limit problems of it with $n_i \rightarrow \infty$. The simplest limit problem is when $\frac{k_i}{n_i} \rightarrow p \in (0, 1)$ and $d_i = d$, which is simply the following.

\begin{Problem} \label{KR-simplelimit}
For fixed parameters $0 < p < d < 1$, we are looking for a countable sequence $0 \le a_1, a_2, ...$ of \emph{nonnegative} real numbers with $\sum a_i = 1$ which maximizes
$\Pr\big( \sum a_i x_i > d \big)$,
where $x_1, x_2, ...$ are independent indicator variables with probability $p$.
\end{Problem}

Denote this maximum probability by $K(p, d)$.

\begin{Theorem}
For any $p, d \in (0, 1)$,
sequence $(n_i, k_i, d_i)$, ,
\begin{equation*}
\limsup_{n_i \rightarrow \infty;\ \frac{k_i}{n_i} \rightarrow p;\ d_i \rightarrow d} K(n_i, k_i, d_i) = \limsup_{\eps, \delta \rightarrow 0} K(p + \eps, d + \delta).
\end{equation*}
\end{Theorem}

\begin{proof}[Sketch of proof]
As in the proof of Theorem~\ref{MMS-limit}, we convert the solutions of the KR-problem to solutions of the limit problem and vice versa, with conversion errors $\rightarrow 0$ as $n \rightarrow \infty$.
\end{proof}

The proof that it is a limit problem.
We only note that the key observation is that if $\max a_i$ is small, then $\rm{Var}\big(\sum a_i x_i\big)$ is also small, therefore, with a large probability, $\sum a_i x_i \approx p < d$ which cannot be optimal.
Based on the conjectured solution of Problem~\ref{KR-simplelimit}, the conjectured value of $s$ is described in Figure~\ref{bigpicture} (extended with the simple case $d < \frac{k}{n}$).

\begin{figure}[]
\begin{center}
\includegraphics[width=\textwidth, trim=50mm 20mm 50mm 20mm]{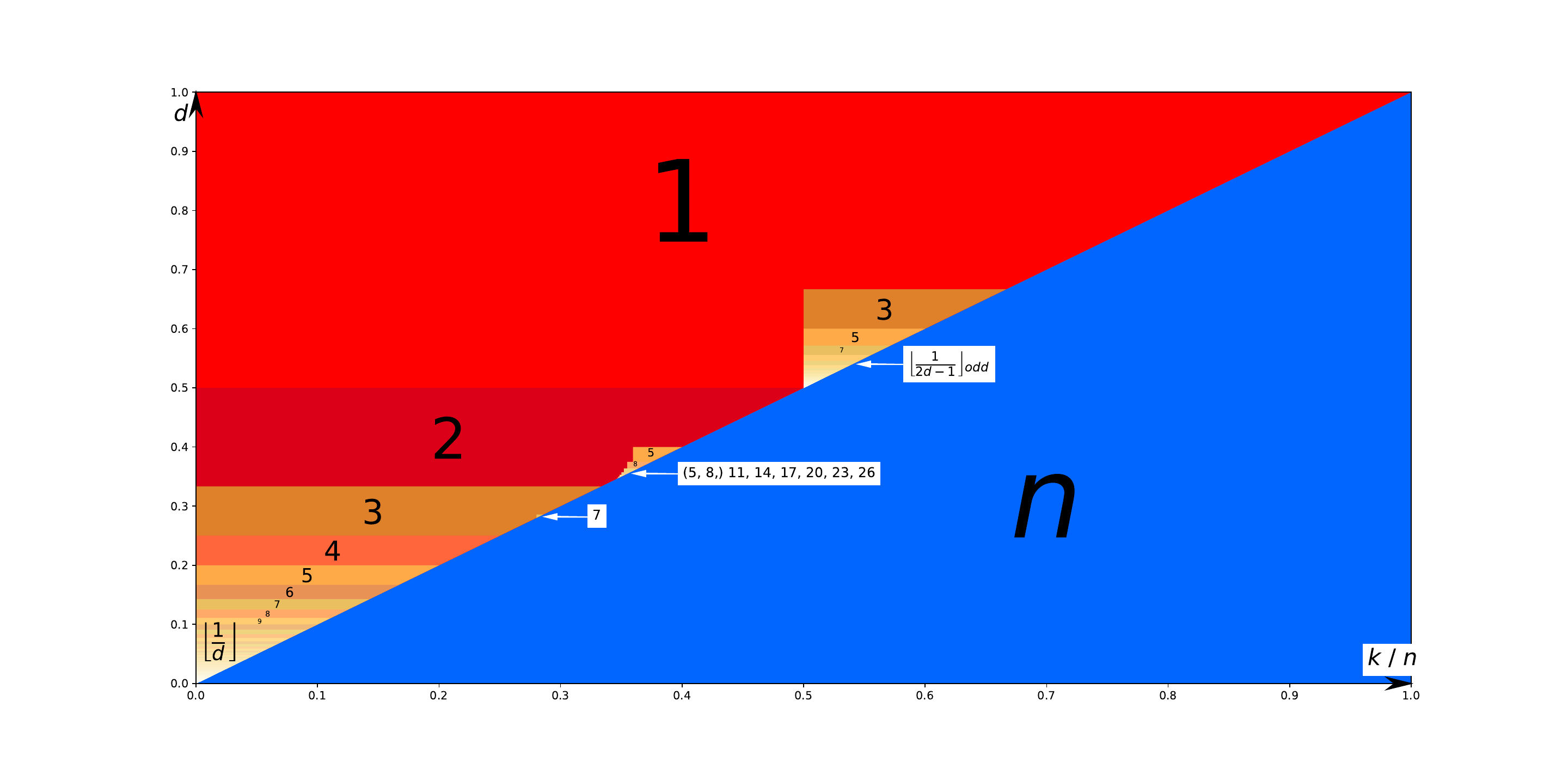}

\caption{The conjectured asymptotical solution for fixed $d$ and converging $\frac{k}{n}$.}
\label{bigpicture}
\end{center}
\end{figure}

However, it misses some very important cases: when $d$ is just above $\frac{k}{n}$.
Now we define some other limit problems about this case. When $n_i \rightarrow \infty$, $\frac{k_i}{n_i} \rightarrow p \in (0, 1)$, $d_i = \frac{k_i}{n_i}$, then we just get to MMS-CompactLimit. Let us see what happens if we increase $d$. If $d > \frac{k}{n-1}$, then it is no longer useful to choose $s = n - 1$ (which corresponds to the solution $-1, 0, 0, ...$ for the MMS-problem). Moreover, if $d > \frac{k}{n-c}$ for any constant $c$, then $s \ge n - c$ provides $\Pr( a_{i_1} + a_{i_2} + ... + a_{i_k} > d ) = 0$. Therefore, if $n_i \rightarrow \infty$, $\frac{k_i}{n_i} \rightarrow p \in (0, 1)$, $d_i \rightarrow p$ and $n_i d_i - k_i \rightarrow \infty$, then this leads to the following limit problem.

\begin{Problem} \label{Kikuta-limit}
For a fixed $p \in (0, 1)$, we are looking for a countable sequence $0 \le a_1, a_2, ...$ of \emph{nonnegative} real numbers with $\sum a_i^2 < \infty$ and a real number $\sigma$ which maximizes
\begin{equation} \label{K-l-eq}
\Pr\big( \sum a_i (x_i - p) + \sigma x_0 > 0 \big),
\end{equation}
where $x_1, x_2, ...$ are indicator variables with probability $p$, and $x_0$ is a variable with standard normal distribution, and $x_0, x_1, x_2, ...$ are independent.
\end{Problem}

Denote this supremum probability by $K(p)$.
Figure~\ref{KR-compare} shows the efficiency of the conjectured optimal solutions.

\begin{figure}[]
{\scriptsize K(p)}
\begin{center}
\includegraphics[width=0.97\textwidth]{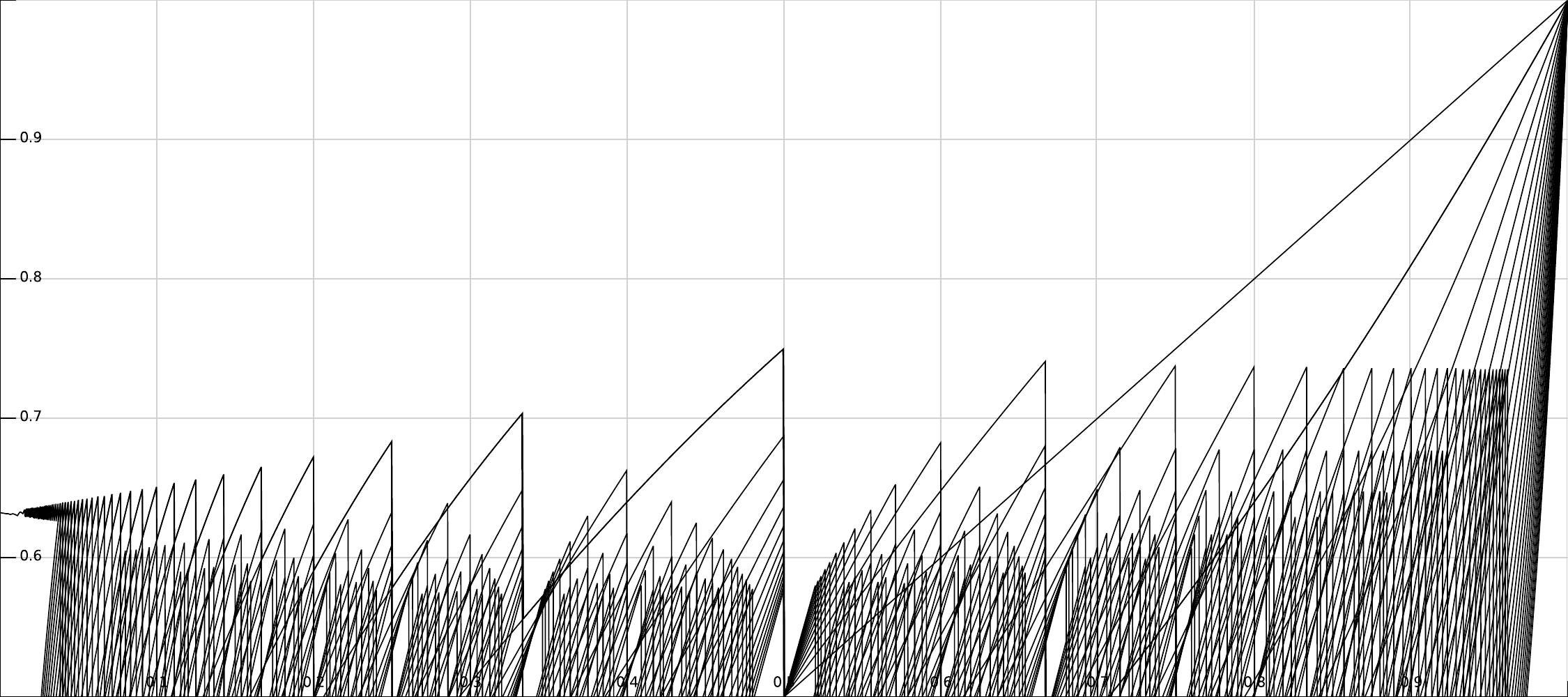}{\scriptsize \ p}

\caption{The efficiency of the solutions when $n \rightarrow \infty$, $\frac{k_i}{n_i} \rightarrow p$, $d_i \rightarrow p$ but $d_i \cdot n_i - k_i \rightarrow \infty$.}
\label{KR-compare}
\end{center}
\end{figure}

Again, the reason why we consider it a true limit problem is the following theorem.

\begin{Theorem}
For any sequence $(n_i, k_i, d_i)$, $n_i \rightarrow \infty$ and $\frac{k_i}{n_i} \rightarrow p \in (0, 1)$, $d_i \rightarrow p$ and $n_i d_i - k_i \rightarrow \infty$,
\begin{equation*}
\liminf_{\delta \rightarrow 0} K(p + \delta) \le \liminf K(n_i, k_i, d_i) \le
\limsup K(n_i, k_i, d_i) \le \limsup_{\delta \rightarrow 0} K(p + \delta).
\end{equation*}
\end{Theorem}

\begin{proof}[Sketch of proof] The proof is the same as for Theorem~\ref{MMS-limit} with the additional observation that the median is at most $\frac{2}{n}$, and changing a few terms $a_i$ by $O\big(\frac{1}{n}\big)$ has a negligible effect.
\end{proof}

The following limit problem describes the transition between MMS-Limit and \ref{Kikuta-limit}. If $0 < \frac{k}{n} < d \rightarrow 0$, with different values of $\lim d n - k = \lambda$, then it has the same limit as Problem~\ref{Kikuta-limit} except that \eqref{K-l-eq} is replaced to the following.
\begin{equation*}
\Pr\big( \sum a_i (x_i - p) + \sigma x_0 > -\lambda \min_i a_i \big)
\end{equation*}

Its conjectured solution is shown in Figure~\ref{above-diagonal-fig}.

\begin{figure}[] \label{above-diagonal}
\begin{center}
\includegraphics[width=\textwidth, trim=50mm 20mm 50mm 20mm]{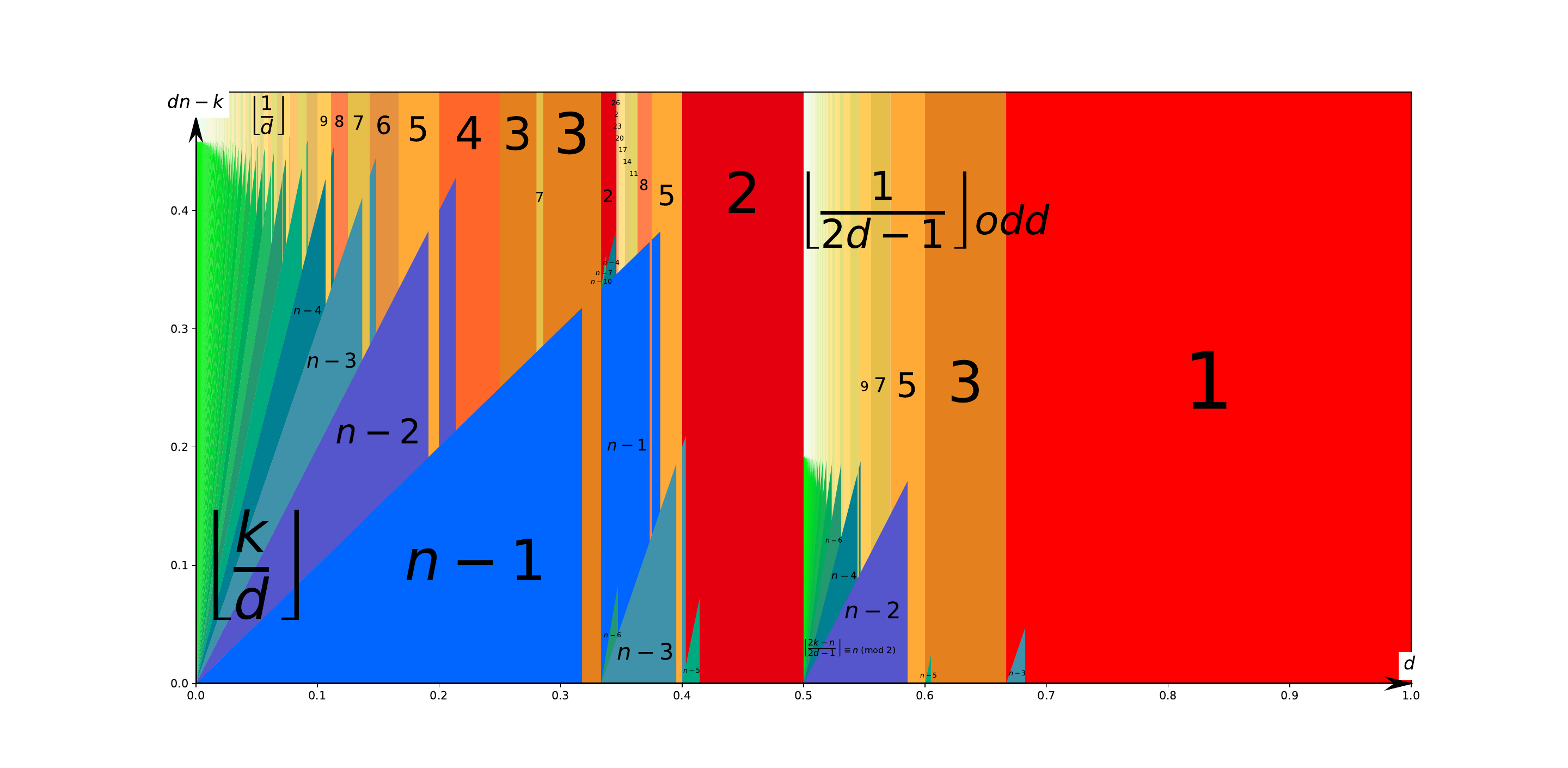}

\caption{The conjectured asymptotical solution for $n_i \rightarrow \infty$, $d_i \rightarrow d$ and convergent $d_i \cdot n_i - k_i$.}
\label{above-diagonal-fig}
\end{center}
\end{figure}


After analyzing all limits, we can make a conjecture about how $s$ depends on $n$ and $d$. It can be read from the figures, but here is a summary.

\begin{Conjecture}
For all $n, k \in \N$, and $d \in (0, 1)$, there is an optimal solution for the KR-Problem of the form $a_1 = a_2 = ... = a_s = \frac{1}{s}$ and $a_{s+1} = a_{s+2} = ... = a_n = 0$, where\footnote{$\lfloor x \rfloor = \max \{y \in \Z : y < x\}$ and $\lfloor x \rfloor_{\text{odd}} = \max \{ y \equiv 1 \ (\text{mod }2), y < x\}$ and $\lfloor x \rfloor_{\equiv n \ (\text{mod }2)} = \max \{ y \equiv n \ (\text{mod }2), y < x\}$. If we define the KR-Problem with "$\le$" instead of "$<$", then here we have $y \le x$.}

\begin{equation*}
s \in \Big\{ \Big\lfloor \frac{1}{d} \Big\rfloor, \Big\lfloor \frac{1}{2d - 1} \Big\rfloor_{\textnormal{odd}}, 5, 7, 8, 11, 14, 17, 20, 23, 26, \Big\lfloor \frac{k}{d} \Big\rfloor, \Big\lfloor \frac{2k-n}{2d-1} \Big\rfloor_{\equiv n \ (\textnormal{mod }2)},
\end{equation*}
\begin{equation*}
n - 10, n - 7, n - 6, n - 5, n - 4, n - 3, n\Big\}.
\end{equation*}
\end{Conjecture}

We show an interesting case about the final analysis of the KR-problem. When $\frac{10}{29} > d > \frac{k}{n} \rightarrow \frac{10}{29}$, then $s = 29$ provides probability tending to $0.5694$, but $s = 2$ provides a very slightly higher limit probability $0.5707$. However, when $n$ is small, then the solution $s = 29$ slightly beats $s = 2$, thanks to the negative correlations between the events of choosing the different terms. There were 167 different pairs $(n, k)$ of this kind, in all cases $\frac{k}{n}$ is very close to $\frac{10}{29}$. In these cases, $\frac{k}{n-1} > \frac{k-1}{n-4} > \frac{10}{29}$ holds, and therefore, $s = n - 1$ or $s = n - 4$ can beat some of these solutions. In 106 cases both beat $s = 29$, in 60 cases only $s = n - 1$ beats $s = 29$ and in only one case, when $n = 62$, $k = 21$, only $s = n - 4$ beats $s = 29$.
To sum up, the solution $s = 29$ was the only serious candidate for being optimal only in a finite number of cases (where $\frac{k}{n}, d \approx \frac{10}{29}$), but at least one of $s = 2$ or $s = n - 1$ or $s = n - 4$ always beats it.

The author found this technique very useful for seeking for counterexamples, as well. Now he strongly believes that the conjecture is true, but it is rather ``accidentally true" and he doubts that there exists a simple proof. He found that the best candidates for counterexamples use the terms $\frac{2}{s}$, $\frac{1}{s}$ and $0$ for some $s$. Showing that there is no counterexample of this form is already a very difficult task, we need completely different arguments for the different cases.

\subsection{Compact limits and a suggestion for proving the conjecture} \label{KR-suggestion}

First, we are trying to prove the conjectures for the limit games. The hope is that we can modify it to a proof of the original problem.
The suggested proof would use a large number of cases, probably analysed using a computer. We consider the following equivalent version of Problem~\ref{KR-simplelimit}, and we show the analysis of a few cases.

\begin{Problem}
For fixed parameters $p, d \in (0, 1)$, we are looking for a countable sequence $0 \le a_1, a_2, ...$ of \emph{nonnegative} real numbers with $\sum a_i \le 1$ which maximizes
\begin{equation*}
\Pr\bigg( \sum a_i x_i + p\Big(1 - \sum a_i \Big) > d \bigg),
\end{equation*}
where $x_1, x_2, ...$ are independent indicator variables with probability $p$.
\end{Problem}

This is the limit of the sequence $a_i$ extended by $\frac{1 - \sum a_i}{\eps}$ number of $\eps$ elements, when $\eps \rightarrow 0$. This is analogous to the MMS-Problem and MMS-Problem-v2 (Problems \ref{MMS-problem} and \ref{MMS-problem-v2}).

\emph{Solution} means an countable sequence $(a_i)$ with $a_1 \ge a_2 \ge ... \ge 0$ and $\sum a_i \le 1$.
We define the distance between two solutions $(a_i)$ and $(b_i)$ by
\begin{equation*}
  {\rm dist}\big((a_i), (b_i) \big) = \inf_{k \in \Z^+} \Big(\sum_{i=1}^{k-1} \big|a_i - b_i\big| + a_k + b_k\Big).
\end{equation*}

\begin{Theorem}
The topological space of the solutions induced by {\rm dist} is compact.
\end{Theorem}

\begin{Theorem}
For any fix parameter $p \in (0, 1)$, we assign the distribution $\sum a_i x_i + p\big(1 - \sum a_i \big)$, where $x_1, x_2, ...$ are independent indicator variables with probability $p$. This mapping is continuous with respect to the topology induced by {\rm dist} and the topology on the space of probability distributions in $\R$.
\end{Theorem}

We omit the proofs of these theorems, but the anticipated proof of the Kikuta--Ruckle Conjecture would not need this proof.

Let us consider the problem with $p = 0.2$. The conjectured maximum probability in Problem~\ref{KR-simplelimit} is $f(d) = 1 - 0.8^{\lfloor 1/d \rfloor}$ if $d \ge 0.2$, and $f(d) = 1$ if $d < 0.2$. This is a proved lower bound given from $s = \big\lfloor \frac{1}{d} \big\rfloor$ or $a_i \equiv 0$. Assume by contradiction that this bound is not tight. This means that, for some $\eps > 0$, the true value of the maximum probability $F(d)$ satisfies $F(d) \le f(d) + \eps$ with equation for at least one value of $d \in \big\{ \frac{1}{2}, \frac{1}{3}, \frac{1}{4}, \frac{1}{5} \big\}$. The plan is to prove that we do not have equation in any of the four cases.

We show a few cases of the proof for $d = \frac{1}{2}$, namely, $F\big(\frac{1}{2}\big) < 0.2 + \eps$.

If $a_1 > \frac{1}{2}$, then $\sum a_i x_i + 0.2 \cdot \big(1 - \sum a_i \big) > \frac{1}{2}$ implies $x_1 = 1$, which has probability $0.2$.

If $a_1$ is small enough, then $\sum a_i x_i + 0.2 \cdot \big(1 - \sum a_i \big)$ is concentrated around $0.2$, and therefore, it is more than $0.5$ with probability less than $0.2$.

As a nontrivial case, if $a_1 + a_2 \ge \frac{1}{2} \ge 2 a_2 - a_1$, then the following argument is valid.
We split the probability distribution of $(x_i)$ into the convex combination of the following four distributions, including their weights (or probabilities). The splits will be independent from the values $x_3, x_4, ...$ . Let $X$ denote the event $\sum a_i x_i + 0.2 \cdot \big(1 - \sum a_i \big) > \frac{1}{2}$.

\emph{Event 1 (with $4\%$).} $x_1 = x_2 = 1$. Then $\Pr(X) = 1$.

\emph{Event 2 (with $64\%$).} $x_1 = x_2 = 0$. Then $\Pr(X) = 0$.

\emph{Event 3 (with $12\%$).} $x_1 = 1$ and $x_2 = 0$. Then $\Pr(X) \le 1$.

\emph{Event 4 (with $20\%$).} With probability $0.8$, $x_1 = 0$ and $x_2 = 1$ and with probability $0.2$, $x_1 = 1$ and $x_2 = 0$. Now $X$ is equivalent to the event that
\begin{equation} \label{trans-a}
(a_1 - a_2) x_1 + \sum_{i \ge 3} a_i x_i + 0.2 \Big(1 - \sum a_i \Big) > \frac{1}{2} - a_2,
\end{equation}
where $x_1, x_3, x_4, x_5, ...$ are independent indicator variables with probability $0.2$. Therefore, if we define $b_1 = \frac{a_1 - a_2}{1 - 2 a_2}$, $b_2 = \frac{a_3}{1 - 2 a_2}$, $b_3 = \frac{a_4}{1 - 2 a_2}$, ..., then \eqref{trans-a} is equivalent to
$\sum b_i x_i + 0.2 \Big(1 - \sum b_i \Big) > \frac{1}{2}$, which has probability at most $0.2 + \eps$.

Therefore, $\Pr(X) \le 4\% \cdot 1 + 64\% \cdot 0 + 12\% \cdot 1 + 20\% \cdot (0.2 + \eps) = 0.2 + 0.2 \eps < 0.2 + \eps$.

\section{Concluding remarks}

When we want to solve a discrete problem, then one of the techniques we often try is to define and solve an infinite version of the problem, hoping that it helps with the original problem.
Here, an infinite version is defined by just exchanging some finite parameters to infinity and making the necessary modifications so as to get a meaningful problem.
In this paper, we showed a better way of doing this.
Namely, we are looking for a problem for which we can prove that its solution is the limit of the solutions of the finite problems.
This is what we call limit problem.

Our technique can be summarized as a 3-step plan, all these steps are generally not easy but easier than solving the original problem directly.
\begin{enumerate}
  \item Find interesting limit problems. This means that we need the convergence but we do not need similarity to the original problem. The simpler the limit problem the better. Having a compact solution space can be useful. (Complete proofs of the convergences are not necessary.)
  \item Solve the limit problems.
  \item Solve the original problem using the solutions of the limit problems.
\end{enumerate}

\section*{Acknowledgements}

I am thankful to Dezső Miklós for his observations and suggestions about the MMS-problem.

\bibliography{caching}

\begin{thebibliography}{10}

\bibitem{alpern2013search}
Steve Alpern, Robbert Fokkink, G~Leszek, Roy Lindelauf, VS~Subrahmanian, et~al.
\newblock {\em Search theory: a game theoretic perspective}.
\newblock Springer Science \& Business Media, 2013.

\bibitem{alpern2012search}
Steve Alpern, Robbert Fokkink, Thomas Lidbetter, and Nicola~S Clayton.
\newblock A search game model of the scatter hoarder's problem.
\newblock {\em Journal of The Royal Society Interface}, 9(70):869--879, 2012.

\bibitem{chowdhury2014manickam}
Ameera Chowdhury, Ghassan Sarkis, and Shahriar Shahriari.
\newblock {The Manickam--Mikl{\'o}s--Singhi conjectures for sets and vector
  spaces}.
\newblock {\em Journal of Combinatorial Theory, Series A}, 128:84--103, 2014.

\bibitem{csoka2016solution}
Endre Cs{\'o}ka and Thomas Lidbetter.
\newblock The solution to an open problem for a caching game.
\newblock {\em Naval Research Logistics (NRL)}, 2016.

\bibitem{erdHos1965problem}
Pál Erd{\H{o}}s.
\newblock A problem on independent r-tuples.
\newblock In {\em ARTICLE IN PRESS B. Bollob{\'a}s et al./Journal of
  Combinatorial Theory, Series A}. Citeseer, 1965.

\bibitem{furstenberg1977ergodic}
Harry Furstenberg.
\newblock Ergodic behavior of diagonal measures and a theorem of szemer{\'e}di
  on arithmetic progressions.
\newblock {\em Journal d'Analyse Math{\'e}matique}, 31(1):204--256, 1977.

\bibitem{huang2014minimum}
Hao Huang and Benny Sudakov.
\newblock The minimum number of nonnegative edges in hypergraphs.
\newblock {\em The Electronic Journal of Combinatorics}, 21(3):P3--7, 2014.

\bibitem{kikuta2002continuous}
Kensaku Kikuta and William~H Ruckle.
\newblock Continuous accumulation games on discrete locations.
\newblock {\em Naval Research Logistics (NRL)}, 49(1):60--77, 2002.

\bibitem{lovasz2006limits}
L{\'a}szl{\'o} Lov{\'a}sz and Bal{\'a}zs Szegedy.
\newblock Limits of dense graph sequences.
\newblock {\em Journal of Combinatorial Theory, Series B}, 96(6):933--957,
  2006.

\bibitem{manickam1987number}
N~Manickam and D~Mikl{\'o}s.
\newblock On the number of non-negative partial sums of a non-negative sum.
\newblock In {\em Colloq. Math. Soc. J\'anos Bolyai}, volume~52, pages
  385--392, 1987.

\bibitem{palvolgyi2017all}
D{\"o}m{\"o}t{\"o}r P{\'a}lv{\"o}lgyi.
\newblock All or nothing caching games with bounded queries.
\newblock {\em arXiv preprint arXiv:1702.00635}, 2017.

\bibitem{pokrovskiy2015linear}
Alexey Pokrovskiy.
\newblock {A linear bound on the Manickam--Mikl{\'o}s--Singhi conjecture}.
\newblock {\em Journal of Combinatorial Theory, Series A}, 133:280--306, 2015.

\bibitem{tyomkyn2012improved}
Mykhaylo Tyomkyn.
\newblock An improved bound for the manickam--mikl{\'o}s--singhi conjecture.
\newblock {\em European Journal of Combinatorics}, 33(1):27--32, 2012.

\end{thebibliography}

\end{document}